\DeclareMathOperator*{\supp}{supp}
\DeclareMathOperator*{\dv}{div}
\newcommand{\D}{\Delta}
\newcommand{\Dd}{\Delta_\ddagger}
\newcommand{\na}{\nabla}
\newcommand{\nad}{\nabla_\ddagger}
\newcommand{\nab}{\nabla_\bullet}
\newcommand{\bx}{\mathbf{x}}
\newcommand{\bv}{\mathbf{v}}
\newcommand{\bw}{\mathbf{w}}
\newcommand{\bc}{\mathbf{c}}
\newcommand{\be}{\mathbf{e}}
\newcommand{\pd}{\partial}
\newcommand{\om}{\omega}
\newcommand{\Omh}{\Omega^h}
\newcommand{\Sih}{\Sigma^h}
\newcommand{\dif}{\mathrm{d}}
\newcommand{\bnt}{\mathbf{0}}
\newcommand{\bbv}{\pmb{\mathbb{v}}^h}
\newcommand{\bbp}{\mathbb{p}^h}
\newcommand{\Fhb}{\mathbf{F}^h}
\newcommand{\tvt}{{\tilde{v}}^h_3}
\newcommand{\tv}{{\tilde{\bv}}^h}
\newcommand{\vrem}{{\bv}^h_{rem}}
\newcommand{\tw}{\tilde{\bw}}
\newcommand{\prem}{p^h_{rem}}
\newcommand{\premb}{\overline{p}^h_{rem}}
\newtheorem{thm}{Theorem}
\newtheorem{Rem}{Remark}
\newtheorem{prop}{Proposition}
\newtheorem{lem}{Lemma}
\newtheorem{cor}{Corollary}
\newtheorem{cla}{Claim}
\begin{document}

\title{Modified Reynolds equation for steady flow through a curved pipe}

\author{A. Ghosh}
 \address{Mathematics and Applied Mathematics, MAI, Link\"oping University, SE 58183 Link\"oping, Sweden}
 \email{arpan.ghosh@liu.se}

\author{V. A. Kozlov}
 \address{Mathematics and Applied Mathematics, MAI, Link\"oping University, SE 58183 Link\"oping, Sweden}
 \email{vladimir.kozlov@liu.se}

\author{S. A. Nazarov}
 \address{St. Petersburg State University, 198504, Universitetsky pr., 28, Stary Peterhof, Russia; Institute of Problems of Mechanical Engineering RAS, V.O., Bolshoj pr., 61, St. Petersburg, 199178, Russia}
 \email{srgnazarov@yahoo.co.uk}

\begin{abstract}
	A modified Reynolds equation governing the steady flow of a fluid with low Reynolds number through a curvilinear, narrow tube, with its derivation from Stokes equations through asymptotic methods is presented. The channel considered may have large curvature and torsion. Approximations of the velocity and the pressure of the fluid inside the channel are constructed by artificially imposing appropriate boundary conditions at the inlet and the outlet. A justification for the approximations is provided along with a comparison with a simpler case.
\end{abstract}

\maketitle

\section{Introduction}

Fluid flow through narrow tubes have a wide range of applications including industrial, vehicular and biological systems. This has generated interest in the mathematical study of such flows and various results have been produced based on different cases, configurations and assumptions, see for example \cite{PaPi15II,CaMi} and references therein. A vast majority of the existing works are about straight pipes. Straight narrow pipes with varying cross-sections are considered in \cite{NaPi90}. Curvilinear pipes with torsion and curvature have been studied in \cite{Maru01}, although a fixed cross-section is assumed throughout the length of the pipe. However, in \cite{Maru01}, the effects of the curvature and torsion can only be seen in the second term of the  asymptotic approximation as the leading terms are free of these effects. The Reynolds equation is widely used to describe flow through thin channels and its derivation from the Stokes equations can be found in \cite{BaCh86,Cim}. 

We consider steady flows through thin pipes having a variable geometry of the cross-section while allowing large longitudinal curvature and torsion. We aim to construct an analogue of the Reynolds equation that can be successfully used for this general situation. Certainly, there are some natural restrictions, for example, on the longitudinal curvature.

Our main task is to derive a one-dimensional Reynolds equation for the pressure and an equation for the longitudinal component of the velocity resulting in a modified Poiseuille flow. To construct these terms, we need two boundary conditions which are usually prescribed through pressure and/or flux. The discrepancy is represented by the boundary layer terms that are significant near the end cross-sections. This is the reason that we choose artificial boundary conditions which allow us to easily estimate the discrepancy.


\subsection{Nondimensionalization of the problem}
Let us consider a curvilinear pipe of length $L$ with varying channel diameter and non-circular cross-section. Let us denote the mean radius by $H$ which satisfies $H=Lh$. We assume that the pipe is narrow or in other words, $h\ll1$ is a small parameter. Let us denote the steady state velocity and the kinematic pressure of the flowing fluid by $\textbf{V}$ and $P$ respectively. These satisfy the Navier-Stokes system
\begin{gather*}
-\nu\D_\textbf{X}\textbf{V}+(\textbf{V}\cdot\na_\textbf{X})\textbf{V}+\na_\textbf{X} P=0,\\
-\na_\textbf{X}\cdot\textbf{V}=0,
\end{gather*}
complemented by suitable boundary conditions and where $\nu$ is the kinematic viscosity of the fluid flowing through the tube and $\textbf{X}$ signifies the position vector in suitable units of length.

Let $F$ be the flux through a cross-section. We introduce new dimensionless variables for our problem:
\[
\bx=\frac{1}{L}\textbf{X},\quad \bv=\frac{H^2}{F}\textbf{V}\quad\mbox{and}\quad p=\frac{H^2L}{\nu F}P.
\]

In terms of the new dimensionless quantities and the Reynolds number $\textsf{Re}:=\frac{FL}{\nu H^2}$, the Stokes system takes the form
\[
-\D_\bx\bv+\textsf{Re}(\bv\cdot\na_\bx)\bv+\na_\bx p=0.
\]
For the article, we assume a small Reynolds number so that we are able to discard the convective term.

\subsection{Formulation of the problem}
Let $\bc$ denote the arc-length parameterized centre curve for the pipe in consideration with $s\in[0,1]$ being the arc-length parameter. Let $hR>0$ give the distance of the interior boundary of the pipe from $\bc(s)$ along a direction which is perpendicular to the centre curve at $s$. 

Let us consider the regions
\begin{align*}
		\Omh&=\{\bx(r,\theta,s):\eta=h^{-1}r,0\leq r< hR(\theta,s),\theta\in[0,2\pi),s\in(0,1)\},\\
		\Sih&=\{\bx\in\pd\Omh:s\in(0,1)\},\\
		\om(s)&=\{\bc(s)+\eta\be_1(\theta,s):\eta	=h^{-1}r,0\leq r< hR(\theta,s),\theta\in[0,2\pi)\}.
\end{align*}
The region $\Omh$ is assumed to be locally Lipschitz and each transversal cross-section, hence $\om(s)$ as well, to be a star domain for every $s$..

The goal is to find an asymptotic approximation of the solution of the Stokes problem 
\begin{align}\label{Stokes}
	-\D_\bx\bv^h+\na_\bx p^h&=\bnt\quad\mbox{in}\quad\Omh,\\\label{Div}
				-\na_\bx\cdot\bv^h&=	 0\quad\mbox{in}\quad\Omh,\\\label{NoSlip}
										 \bv^h&=\bnt\quad\mbox{on}\quad\Sih,
\end{align}
supplemented with appropriate boundary conditions at the cross-sections $s=0$ and $s=1$.

\subsection{Results}
The primary highlight of this article is the derivation of a modified Reynolds equation
\[-\pd_s(G(s)\pd_sp^0(s))=0,\quad s\in(0,1),
\]
for flow through a curved pipe. Here, the function $G$, defined as \[G(s):=2\!\!\int\limits_{\om(s)}\!\!{\Psi(\eta,\theta,s)\eta\dif\eta\dif\theta},\] depends on the geometry of the pipe and $p^0$ is the leading term in the formal asymptotic expansion of $p^h$ as stated in \eqref{Ans}. Defining $\beta(\eta,\theta,s)$ as the scale factor corresponding to the longitudinal parameter $s$ and $\nad$ as the gradient operator on $\om(s)$, the function $\Psi$ is obtained as the solution of \[-\nad\cdot\beta\nad\Psi=2\quad\mbox{in}\quad\om(s),\quad\Psi=0\quad\mbox{on}\quad\pd\om(s).
\] 

The modified Reynolds equation takes into consideration a relatively wide range of curvature as well as variation of the diameter of the pipe as characterized by the following restrictions:
\begin{align}\label{asmp}
	|\bc'''|\leq ch^{-2+2\delta},\quad|\bc''''|\leq ch^{-3+3\delta},\quad|\pd_sR|\leq ch^{-1+2\delta}\quad\mbox{and}\quad|\pd_s^2R|\leq ch^{-2+3\delta}
\end{align}
for some positive $\delta$. In particular, the first inequality implies \[|\bc''|\leq ch^{-1+\delta}.\] Our approach in this article is to treat these geometrical quantities as separate parameters in the beginning and choosing the optimal orders by the end of the analysis. The new equation covers the case of smooth curvature (order $0$ w.r.t h) and nearly constant radius (order $1$ w.r.t h) of the pipe as well which is achieved by replacing the right hand sides in \eqref{asmp} by a constant independent of $h$.


Based on the modified Reynolds equation, under the said assumptions and provided the appropriate boundary conditions, we proceed to construct an approximation $\{\bbv,\bbp\}$ for the solution $\{\bv^h,p^h\}$ of \eqref{Stokes}-\eqref{NoSlip} in the form
\begin{equation*}
\begin{aligned}
					\bbp(\bx)&=h^{-3}p^0(s),\\
\bbv(\bx)&=h^{-1}v^1_3(\eta,\theta,s)\bc'(s)+\bv^2_\ddagger(\eta,\theta,s).\\
\end{aligned}
\end{equation*}
Accordingly, we obtain the representation 
\begin{equation}
\bv^h=\bbv+\vrem\quad\mbox{and}\quad p^h=\bbp+\prem.
\label{Sol}
\end{equation}

We finally prove that the error terms $\{\vrem,\prem\}$ in \eqref{Sol} admit the bounds
\begin{align}\label{rest}
	h\|\na_\bx\vrem\|+\|\vrem\|+h^2\|\prem-\premb\|\leq ch^{\delta}
\end{align}
whereas $\{\bbv,\bbp\}$ are estimated by
\begin{align}\label{appest}
	h\|\na_\bx\bbv\|+\|\bbv\|+h^2\|\bbp-\overline{\mathbb{p}}^h\|\leq ch^{0}
\end{align}
thereby justifying the approximate solution for any positive $\delta$.

\section{Geometry and notations}
\begin{figure}%
\begin{tikzpicture}
	\draw[rotate=45] (0,5) ellipse (.5 and 1);
	\draw[rotate=45] (5,0) ellipse (1 and .5);
	\draw[dashed][rotate=45] (0,5) to [out=0,in=90] (5,0);
	\draw[rotate=45] (0,6) to [out=0,in=90] (6,0);
	\draw[rotate=45] (0,4) to [out=0,in=90] (4,0);
	\node at (2,4) {$\Omh$};
	\draw[->] (0,5) to (0,5.75);
	\draw[->] (0,5) to (0.75,5);
	\draw[->] (0,5) to (-0.4,4.6);
	\node[right] at (0,5.75) {$\be_1(\theta,s)$};
	\node[right] at (-0.4,4.6) {$\be_2(\theta,s)$};
	\node[right] at (0.75,5) {$\bc'(s)$};
	\node[above left] at (0,5) {$\bc(s)$};
	\draw[<->] (-2.5,4.33) to (-3,5.2);
	\node[right] at (-2.85,5) {$hR$};
\end{tikzpicture}
\caption{The curvilinear coordinate frame $\{\be_1,\be_2,\bc'\}$ depicted at the point $\bc(s)$ on the centre curve in the domain $\Omh.$}%
\label{fig1}%
\end{figure}
The ambient three dimensional space is taken to have a canonical Cartesian coordinate system. The initial direction of the curve is assumed to be along the third coordinate direction, i.e., $\bc'(0)=(0,0,1)^T.$ The vector quantities for the problem are described using the coordinate frame consisting of the triplet $\{\be_1(\theta,s),\be_2(\theta,s),\bc'(s)\}$, depicted in Figure \ref{fig1}, where $\be_i$ are obtained by solving
\[
\pd_s\be_i(\theta,s)=-(\bc''(s)\cdot\be_i(\theta,s))\,\bc'(s)
\]
with the initial conditions
\[\be_1(\theta,0)=(\cos{\theta},\sin{\theta},0)^T\mbox{ and }\be_2(\theta,0)=(-\sin{\theta},\cos{\theta},0)^T.\]
Additionally, they also satisfy
\[\pd_\theta\be_1(\theta,s)=\be_2(\theta,s)\mbox{ and }\pd_\theta\be_2(\theta,s)=-\be_1(\theta,s).\]
\begin{Rem}
	The frame $\{\be_1(\theta,s),\be_2(\theta,s),\bc'(s)\}$ is an orthonormal frame of reference. As opposed to the Frenet-Serret frame, it is well defined even in the curvature-free segments of the pipe.
\end{Rem}
The parameter $\theta\in[0,2\pi)$ signifies the direction from a point on $\bc$, along the plane perpendicular to $\bc'$ at that point, with respect to some reference direction. Throughout this article, vectors are denoted in bold while their components along $\be_1,\be_2$ and $\bc'$ are given by the corresponding letter with subscripts $1,2$ and $3$ respectively.

With this frame, we have new curvilinear coordinates $\{r,\theta,s\}$ which are related to the Cartesian coordinates by
\[
\bx(r,\theta,s)=\bc(s)+r\be_1(\theta,s),\quad 0\leq r\leq hR(\theta,s).
\]
Note that in the absence of curvature, $\{r,\theta,s\}$ are cylindrical coordinates for the tube.

Clearly, $R$ must be positive and sufficiently smooth and we define
\begin{equation}\label{Hbnd1}
	\gamma:=\max\limits_{(\theta,s)\in[0,2\pi)\times[0,1]}|\pd_sR(\theta,s)|
\end{equation}
and
\begin{equation}\label{Hbnd2}
	\gamma^*:=\max\limits_{(\theta,s)\in[0,2\pi)\times[0,1]}|\pd_s^2R(\theta,s)|.
\end{equation}

Additionally, we also define 
\begin{align}\label{3der}
	\lambda:=\max\limits_{s\in[0,1]}|h\bc'''(s)|
\end{align}
along with
\begin{align}\label{4der}
	\lambda^*:=\max\limits_{s\in[0,1]}|h\bc''''(s)|.
\end{align}
Also as a result,
\begin{align}\label{2der}
	|\bc''(s)|=\sqrt{|\bc'''(s)\cdot\bc'(s)|}\leq h^{-1/2}\lambda^{1/2} \quad\forall s\in[0,1].
\end{align}

Let $\nab$ be the two dimensional gradient operator on a cross-section, i.e., \[\nab:=\be_1\pd_r+\frac{1}{r}\be_2\pd_\theta.\]
Let $\nad$ and $\Dd$ denote the components of the gradient operator and the Laplacian respectively, on a cross-section in terms of the scaled parameters $\eta:=h^{-1}r$ and $\theta$, i.e., \[\nad:=\be_1\pd_\eta+\frac{1}{\eta}\be_2\pd_\theta=h\nab\mbox{ and }\Dd:=\nad\cdot\nad=\pd_\eta^2+\frac{1}{\eta}\pd_\eta+\frac{1}{\eta^2}\pd_\theta^2.\]
Then, we have
\begin{align*}
	\na_\bx=\nab+\beta^{-1}\bc'\pd_s=h^{-1}\nad+\beta^{-1}\bc'\pd_s,\\
\end{align*}
and
\begin{gather*}
\D_\bx=h^{-2}\Dd-h^{-1}\beta^{-1}\bc''\!\cdot\!\nad+\beta^{-2}\pd_s^2+h\beta^{-3}\eta\bc'''\!\cdot\!\be_1\pd_s\\
=h^{-2}\beta^{-1}\nad\cdot\beta\nad+\beta^{-1}\pd_s\beta^{-1}\pd_s
\end{gather*}
where we have introduced the scale factor \[\beta(r,\theta,s):=|\pd_s\bx(r,\theta,s)|=1-r\bc''(s)\cdot\be_1(\theta,s)=1-h\eta\bc''(s)\cdot\be_1(\theta,s)\] corresponding to the parameter $s$ and used the fact that $\nab\beta=h^{-1}\nad\beta=-\bc''.$ 
Then due to \eqref{3der}, \eqref{4der} and \eqref{2der}, we have
\begin{equation}
|\pd_s\beta|\leq c\lambda\quad\mbox{and}\quad|\pd_s^2\beta|\leq c(\lambda^*+h^{-1/2}\lambda^{3/2}).
\label{2derb}
\end{equation}

The physical restriction on the curvature such that $h\bc''(s)\cdot\be_1(\theta,s)<R(\theta,s)^{-1}$ for all $(\theta,s)\in[0,2\pi)\times[0,1]$ eliminates the possibility of the pipe curving into itself. Moreover, to ensure the validity of the asymptotic procedure followed in this article, we must additionally assume
\begin{equation}
\lambda=o(h^{-1}).
\label{hl}
\end{equation}

For integration over the cross sections, we have the area element 
\[\dif\sigma(r,\theta)=r\dif r\dif\theta.\]
On the other hand, the volume element is given in the new coordinates as
\[\dif\bx(r,\theta,s)=\beta(r,\theta,s)\dif\sigma(r,\theta)\dif s.\]

Let us denote the velocity vector component wise as  $\bv^h=v^h_1\be_1+v^h_2\be_2+v^h_3\bc'$.
The components of the quantities in \eqref{Stokes} along any cross-section $\om(s)$ of the pipe satisfy
\begin{equation}
\begin{aligned}
	-h^{-2}\beta^{-1}\nad\cdot\beta\nad\bv^h_\ddagger+h^{-1}\nad p^h+\beta^{-2}\bc''\bc''\cdot\bv^h_\ddagger-\bc'''_\ddagger\beta^{-2}v^h_3-\bc''\beta^{-2}\pd_sv^h_3\\
	-\bc''\beta^{-1}\pd_s(\beta^{-1}v^h_3)-\sum\limits_{i=1,2}\be_i\beta^{-1}\pd_s(\beta^{-1}\pd_sv^h_i)=\bnt_\ddagger\quad\mbox{in}\quad\Omh.
\end{aligned}
\label{Stokesdd}
\end{equation}
On the other hand, \eqref{Stokes} results in the following equation for the direction along the length of the pipe:
\begin{equation}
\begin{aligned}
	-h^{-2}\beta^{-1}\nad\cdot\beta\nad v^h_3+\beta^{-2}\bc'''\cdot\bv^h_\ddagger+\sum\limits_{i=1,2}\bc''\cdot\be_i\beta^{-1}\pd_s(\beta^{-1}v^h_i)\\
	-\beta^{-2}\bc'''\cdot\bc'v^h_3-\beta^{-1}\pd_s(\beta^{-1}\pd_sv^h_3)+\beta^{-1}\pd_sp^h=0\quad\mbox{in}\quad\Omh.
\end{aligned}
\label{Stokes3}
\end{equation}
Finally, the divergence equation \eqref{Div} can be reformulated as
\begin{equation}
	-h^{-1}\beta^{-1}\nad\cdot\beta\bv^h_\ddagger-\beta^{-1}\pd_sv^h_3=0\quad\mbox{in}\quad\Omh.
\label{Div0}
\end{equation}
In the above and henceforth, $\ddagger$ in the subscript of the vector symbols denote their respective projections onto the cross-sectional plane.

In order to ensure uniqueness of the asymptotic solution of the above problem, we intend to impose additional artificial conditions at the ends of the pipe. We shall argue in the next sections that a prescribed flux at the inlet and an ambient (possibly atmospheric) pressure condition at the outlet are sufficient for our purpose.

\section{Model problems and estimates}

In this section, we present the estimates related to some model problems that we rely upon in the asymptotic procedure. We use similar notations for function spaces as in \cite{Temam}, which include standard notations for Sobolev spaces. In particular, the function spaces denoted by bold letters represent the corresponding space of vector/tensor valued functions of the appropriate dimension.

\subsection{Stokes system}

We first consider a modified Stokes problem on the two-dimensional domain $\om(s)$. We present the relevant estimates in the theorem that follows.
\begin{thm}\label{thm1}
Let there be given $\mathbf{f}\in \mathbf{H}^{-1}(\om(s))$, $g\in L^2(\om(s))$ and $\mathbf{h}\in \mathbf{H}^{1/2}(\pd\om(s))$ satisfying the compatibility condition,
\begin{equation}
\int\limits_{\om(s)}{\beta g}\dif\sigma(\eta,\theta)+\int\limits_{0}^{2\pi}{\beta\mathbf{h}\cdot(R\be_1-(\pd_\theta R)\be_2)\dif\theta}=0.
\label{compat}
\end{equation}
Then there exist a unique $\mathbf{u}\in \mathbf{H}^{1}(\om(s))$ and a unique $q\in L^2(\om(s))$ up to a constant that solve the two-dimensional modified Stokes problem
\begin{equation}\label{modeq1a}
\begin{aligned}
	-\beta^{-1}\nad\cdot\beta\nad\mathbf{u}+\nad q=\mathbf{f},\quad-\beta^{-1}\nad\cdot\beta\mathbf{u}=g\quad\mbox{in}\quad\om(s),\\
	\mathbf{u}=\mathbf{h}\quad\mbox{on}\quad\pd\om(s).
\end{aligned}
\end{equation}
The solutions admit the estimate
\begin{equation}\label{mod1a}
\begin{aligned}
	\|\nad\mathbf{u}\|_{\mathbf{L}^2(\om(s))}+\|\mathbf{u}\|_{\mathbf{L}^2(\om(s))}+\|q-\bar{q}\|_{L^2(\om(s))}\\\leq c(\|\mathbf{f}\|_{\mathbf{H}^{-1}(\om(s))}+\|g\|_{L^2(\om(s))}+\|\mathbf{h}\|_{\mathbf{H}^{1/2}(\pd\om(s))}),
	\end{aligned}
\end{equation}
where $\bar{q}$ is $q$ averaged over $\om(s)$.

Furthermore, if $\mathbf{f}\in \mathbf{L}^2(\om(s))$, $g\in H^{1}(\om(s))$ and $\mathbf{h}\in \mathbf{H}^{3/2}(\pd\om(s))$, then $\mathbf{u}\in \mathbf{H}^{2}(\om(s))$ and $q\in H^{1}(\om(s))$ satisfy
\begin{equation}\label{mod1b}
\begin{aligned}
	\|\nad\nad\mathbf{u}\|_{\mathbf{L}^2(\om(s))}+\|\nad q\|_{\mathbf{L}^2(\om(s))}\\\leq c(\|\mathbf{f}\|_{\mathbf{L}^2(\om(s))}+\|g\|_{H^{1}(\om(s))}+\|\mathbf{h}\|_{\mathbf{H}^{3/2}(\pd\om(s))}).
	\end{aligned}
\end{equation}
\end{thm}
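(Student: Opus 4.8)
The plan is to recast \eqref{modeq1a} as a saddle-point problem and to apply the Babu\v{s}ka--Brezzi theory, exploiting that at fixed $s$ the scale factor $\beta$ is a smooth weight on $\om(s)$ which is bounded above and below by positive constants (for small $h$, $\beta$ stays close to $1$) and whose gradient $\nad\beta=-h\bc''$ is controlled. First I would reduce to homogeneous Dirichlet data. Choosing an extension $\mathbf{u}_0\in\mathbf{H}^1(\om(s))$ of $\mathbf{h}$ with $\|\mathbf{u}_0\|_{\mathbf{H}^1(\om(s))}\leq c\|\mathbf{h}\|_{\mathbf{H}^{1/2}(\pd\om(s))}$, the difference $\mathbf{w}=\mathbf{u}-\mathbf{u}_0$ solves the same system with $\mathbf{f}$ replaced by $\mathbf{F}:=\mathbf{f}+\beta^{-1}\nad\cdot\beta\nad\mathbf{u}_0\in\mathbf{H}^{-1}(\om(s))$, with $g$ replaced by $G:=g+\beta^{-1}\nad\cdot\beta\mathbf{u}_0$, and with zero boundary values. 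Using the divergence theorem together with the identification of $(R\be_1-(\pd_\theta R)\be_2)\dif\theta$ with the unnormalized outward element $\bn\,\dif l$ on $\pd\om(s)$, the compatibility condition \eqref{compat} transforms exactly into the zero-weighted-mean condition $\int_{\om(s)}\beta G\dif\sigma=0$.

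Testing the momentum equation against $\beta\mathbf{v}$ and the continuity equation against $\psi$ leads to the weak problem: find $(\mathbf{w},q)\in\mathbf{H}^1_0(\om(s))\times L^2(\om(s))$ with
\[
a(\mathbf{w},\mathbf{v})+b(\mathbf{v},q)=\langle\mathbf{F},\mathbf{v}\rangle,\qquad b(\mathbf{w},\psi)=(\beta G,\psi),
\]
where $a(\mathbf{w},\mathbf{v})=\int_{\om(s)}\beta\,\nad\mathbf{w}:\nad\mathbf{v}\dif\sigma$ and $b(\mathbf{v},\psi)=-\int_{\om(s)}\psi\,\nad\cdot(\beta\mathbf{v})\dif\sigma$. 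Coercivity of $a$ on $\mathbf{H}^1_0(\om(s))$ follows at once from the lower bound on $\beta$ and the Poincar\'e inequality. The crucial ingredient is the inf--sup (LBB) condition for $b$, which I would reduce to the classical one for the plain divergence: on the Lipschitz, star-shaped cross-section $\om(s)$, any $\psi\in L^2(\om(s))$ of zero mean is of the form $\nad\cdot\mathbf{z}=-\psi$ for some $\mathbf{z}\in\mathbf{H}^1_0(\om(s))$ with $\|\mathbf{z}\|_{\mathbf{H}^1}\leq c\|\psi\|_{L^2}$. Setting $\mathbf{v}=\beta^{-1}\mathbf{z}$ gives $\nad\cdot(\beta\mathbf{v})=-\psi$, hence $b(\mathbf{v},\psi)=\|\psi\|_{L^2}^2$, while $\|\mathbf{v}\|_{\mathbf{H}^1}\leq c\|\mathbf{z}\|_{\mathbf{H}^1}\leq c\|\psi\|_{L^2}$ because $\beta^{-1}$ is smooth and bounded. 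Thus the inf--sup constant is inherited from the unweighted case.

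With coercivity and inf--sup established, the Babu\v{s}ka--Brezzi theorem yields a unique $\mathbf{w}$ and a unique $q$ modulo constants, together with a stability bound that, after adding back $\mathbf{u}_0$, is precisely \eqref{mod1a}. For the higher-order estimate \eqref{mod1b}, I would regard \eqref{modeq1a} as a second-order elliptic system of Agmon--Douglis--Nirenberg type with smooth coefficients built from $\beta$, and invoke the standard interior-plus-boundary shift theorem for the Stokes operator: when $\mathbf{f}\in\mathbf{L}^2(\om(s))$, $g\in H^1(\om(s))$ and $\mathbf{h}\in\mathbf{H}^{3/2}(\pd\om(s))$, the solution gains two derivatives, giving $\mathbf{u}\in\mathbf{H}^2(\om(s))$, $q\in H^1(\om(s))$ with the asserted estimate. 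This step uses the assumed smoothness of $R$, so that $\pd\om(s)$ is regular enough for the shift; the weight $\beta$, being smooth and uniformly positive, does not disturb the ellipticity and contributes only lower-order terms absorbed by the $\mathbf{H}^1\times L^2$ bound \eqref{mod1a}.

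The main obstacle I anticipate is verifying the inf--sup condition for the $\beta$-weighted divergence rather than simply quoting classical Stokes well-posedness; the substitution $\mathbf{v}=\beta^{-1}\mathbf{z}$ is what makes the reduction clean. Care is needed to track that all constants depend on $\beta$ only through its uniform upper and lower bounds and through $\|\nad\beta\|_{L^\infty}=h\|\bc''\|_{L^\infty}$, so that they remain uniform in $s$ and harmless as $h\to0$; a secondary technical point, the normalization of $q$ and its compatibility with $G$, I would handle by working throughout in $L^2(\om(s))/\mathbb{R}$.
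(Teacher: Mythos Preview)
Your proposal is correct, but the paper proceeds differently on both parts. For the $H^1\times L^2$ bound \eqref{mod1a} the authors do not set up the Babu\v{s}ka--Brezzi framework at all: they simply accept it as a standard estimate for generalized Stokes systems, citing existing literature and noting that the boundedness of $\beta$ places the problem within that theory. Your explicit inf--sup verification via the substitution $\mathbf{v}=\beta^{-1}\mathbf{z}$ is therefore more self-contained and makes clear exactly how the weighted divergence inherits the classical LBB constant. For the higher-regularity estimate \eqref{mod1b} the paper avoids the Agmon--Douglis--Nirenberg machinery you invoke; instead it rewrites \eqref{modeq1a} as the \emph{standard} Stokes system with right-hand sides perturbed by the lower-order terms $h\beta^{-1}\bc''\cdot\nad\mathbf{u}$ and $h\beta^{-1}\bc''\cdot\mathbf{u}$, applies the classical $H^2\times H^1$ Stokes regularity (Temam), and then absorbs those perturbations by means of \eqref{mod1a} together with the bound $h|\bc''|\leq h^{1/2}\lambda^{1/2}$ from \eqref{2der} and the smallness assumption \eqref{hl}. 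Your ADN route is legitimate but would require verifying the complementing condition for the $\beta$-weighted operator; the paper's perturbation trick sidesteps that and, as a bonus, makes the dependence of the constant on $h$ and $\lambda$ explicit.
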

\begin{proof}
We accept \eqref{mod1a} without proof as it is a standard estimate for generalised  Stokes systems, see e.g. \cite{HuSt}. that can be applied to this case owing to the boundedness of the parameter $\beta$. In order to obtain \eqref{mod1b}, we rewrite \eqref{modeq1a} as
\begin{equation*}
\begin{aligned}
	-\nad\cdot\nad\mathbf{u}+\nad q=\mathbf{f}+h\beta^{-1}\bc''\cdot\nad\mathbf{u},\quad-\nad\cdot\mathbf{u}=g+h\beta^{-1}\bc''\cdot\mathbf{u}\quad\mbox{in}\quad\om(s),\\
	\mathbf{u}=\mathbf{h}\quad\mbox{on}\quad\pd\om(s).
\end{aligned}
\end{equation*}
Using the boundedness of $\beta$ and \eqref{2der}, we have the following estimate due to results in \cite{Temam}.
\[
\begin{aligned}
	\|\nad\nad\mathbf{u}\|_{\mathbf{L}^2(\om(s))}+\|\nad q\|_{\mathbf{L}^2(\om(s))}\leq c(\|\mathbf{f}\|_{\mathbf{L}^2(\om(s))}+h^{1/2}\lambda^{1/2}\|\nad\mathbf{u}\|_{\mathbf{L}^2(\om(s))}\\+\|g\|_{H^{1}(\om(s))}+h^{1/2}\lambda^{1/2}\|\mathbf{u}\|_{\mathbf{H}^1(\om(s))}+\|\mathbf{h}\|_{\mathbf{H}^{3/2}(\pd\om(s))}).
	\end{aligned}
\]
Then applying \eqref{mod1a}, we get \eqref{mod1b} by using \eqref{hl}.
\end{proof}
We have the following corollary as a consequence of the above theorem.
\begin{cor}\label{cor1}
Given $\mathbf{f}\in\mathcal{C}^1((0,1),\mathbf{L}^2(\om(s)))$, $g\in\mathcal{C}^1((0,1),H^{1}(\om(s)))$ and $\mathbf{h}\in\mathcal{C}^1((0,1),\mathbf{H}^{3/2}(\pd\om(s)))$ such that \eqref{compat} holds for every $s\in(0,1)$, then the solution of \eqref{modeq1a} satisfies the estimate
\begin{equation}\label{mod1c}
\begin{aligned}
	\|(\pd_s\mathbf{u})_\ddagger\|_{\mathbf{H}^1(\om(s))}+\|\pd_s(q-\bar{q})\|_{L^2(\om(s))}\leq c(\|(\pd_s\mathbf{f})_\ddagger\|_{\mathbf{H}^{-1}(\om(s))}+\|\pd_sg\|_{L^2(\om(s))}\\+\|\pd_s\mathbf{h}\|_{\mathbf{H}^{1/2}(\pd\om(s))}+(\lambda+\gamma)(\|\mathbf{f}\|_{\mathbf{L}^2(\om(s))}+\|g\|_{H^1(\om(s))}+\|\mathbf{h}\|_{\mathbf{H}^{3/2}(\pd\om(s))}).
	\end{aligned}
\end{equation}
\end{cor}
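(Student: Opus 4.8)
The plan is to differentiate the system \eqref{modeq1a} in the longitudinal variable $s$ and to recognise that the projected pair $\big((\pd_s\mathbf{u})_\ddagger,\pd_sq\big)$ again solves a modified Stokes problem of the form \eqref{modeq1a}, whose data consists of the $s$-derivatives of $\mathbf{f},g,\mathbf{h}$ together with commutator terms carrying the factors $\lambda$ and $\gamma$. Applying the first-order estimate \eqref{mod1a} to this differentiated problem produces the left-hand side of \eqref{mod1c}, while the commutators are absorbed using the second-order estimate \eqref{mod1b}.

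First I would fix the domain. Because $\om(s)$ varies with $s$ through $R(\theta,s)$, the symbol $\pd_s$ applied to functions on $\om(s)$ is not literally meaningful, so I would flatten the cross-sections by the radial stretching $\rho=\eta/R(\theta,s)$, which carries every $\om(s)$ onto the fixed reference disk $\{\rho<1\}$. On this fixed domain $\pd_s$ (taken at frozen $\eta,\theta$) turns into $\pd_s-(\eta\,\pd_sR/R)\pd_\eta$, so differentiation after flattening costs one advection term whose coefficient is $O(\gamma)$ by \eqref{Hbnd1}. Equivalently, and to avoid assuming differentiability of the solution in advance, one works with the difference quotients $\tau^{-1}\big(\mathbf{u}(\cdot,s+\tau)-\mathbf{u}(\cdot,s)\big)$ transported by this map; the $\mathcal{C}^1$ hypotheses on the data, together with the uniform bounds below, guarantee that the limits $\pd_s\mathbf{u},\pd_sq$ exist.

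Next I would carry out the differentiation. Writing the momentum operator as $-\beta^{-1}\nad\cdot\beta\nad$ and differentiating both equations in $s$, the pair $\big((\pd_s\mathbf{u})_\ddagger,\pd_sq\big)$ satisfies
\begin{equation*}
\begin{aligned}
-\beta^{-1}\nad\cdot\beta\nad(\pd_s\mathbf{u})_\ddagger+\nad(\pd_sq)&=(\pd_s\mathbf{f})_\ddagger+\mathbf{F}_\star\quad\mbox{in}\quad\om(s),\\
-\beta^{-1}\nad\cdot\beta(\pd_s\mathbf{u})_\ddagger&=\pd_sg+g_\star\quad\mbox{in}\quad\om(s),\\
(\pd_s\mathbf{u})_\ddagger&=\pd_s\mathbf{h}+\mathbf{h}_\star\quad\mbox{on}\quad\pd\om(s),
\end{aligned}
\end{equation*}
where the commutators $\mathbf{F}_\star,g_\star,\mathbf{h}_\star$ gather the terms produced by differentiating the coefficient $\beta$ (through $\pd_s\beta$, hence $h\bc'''$) and by the flattening advection (through $\pd_sR$). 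The terms generated by the $s$-dependence of the frame are proportional to $\pd_s\be_i=-(\bc''\!\cdot\!\be_i)\bc'$ and therefore point along $\bc'$; the projection $\ddagger$ discards them, which is precisely why only $(\pd_s\mathbf{u})_\ddagger$, and not the large-curvature longitudinal part, can be controlled.

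Then I would estimate the commutators. Each is a product of a single factor $\pd_s\beta$ or $\pd_sR$ with a derivative of $(\mathbf{u},q)$ of order at most two; using $|\pd_s\beta|\le c\lambda$ from \eqref{2derb} and $|\pd_sR|\le\gamma$ from \eqref{Hbnd1} gives
\begin{equation*}
\begin{aligned}
\|\mathbf{F}_\star\|_{\mathbf{H}^{-1}(\om(s))}+\|g_\star\|_{L^2(\om(s))}+\|\mathbf{h}_\star\|_{\mathbf{H}^{1/2}(\pd\om(s))}\\
\le c(\lambda+\gamma)\big(\|\nad\nad\mathbf{u}\|_{\mathbf{L}^2(\om(s))}+\|\nad q\|_{\mathbf{L}^2(\om(s))}+\|\mathbf{u}\|_{\mathbf{H}^1(\om(s))}\big),
\end{aligned}
\end{equation*}
and the right-hand side is bounded, via the second-order estimate \eqref{mod1b}, by $c(\lambda+\gamma)\big(\|\mathbf{f}\|_{\mathbf{L}^2(\om(s))}+\|g\|_{H^1(\om(s))}+\|\mathbf{h}\|_{\mathbf{H}^{3/2}(\pd\om(s))}\big)$. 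The data of the differentiated problem inherits the compatibility condition \eqref{compat} automatically, since $\big((\pd_s\mathbf{u})_\ddagger,\pd_sq\big)$ solves a genuine divergence equation; hence \eqref{mod1a} applies and, combined with the commutator bound, yields \eqref{mod1c}. The main obstacle is exactly the $s$-dependence of the domain: the flattening (or the transported difference quotients) must be arranged so that the extra coefficient is correctly identified as $O(\gamma)$ and so that the differentiated data still meets the compatibility requirement of Theorem \ref{thm1}. Once the moving-boundary contribution is cleanly separated from the coefficient contribution, the remainder is bookkeeping that reproduces the factor $\lambda+\gamma$.
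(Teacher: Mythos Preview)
Your proposal is correct and follows essentially the same route as the paper: differentiate \eqref{modeq1a} in $s$, identify the resulting system as another instance of \eqref{modeq1a} for $((\pd_s\mathbf{u})_\ddagger,\pd_sq)$ with commutator data of size $O(\lambda+\gamma)$, and apply \eqref{mod1a} to it while controlling the commutators through \eqref{mod1b}.

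Two small differences are worth noting. First, the paper does not flatten to a reference disk; it works on $\om(s)$ directly and encodes the moving boundary through the boundary condition $\pd_s\mathbf{u}=\pd_s\mathbf{h}-(\pd_sR)\pd_\eta\mathbf{u}$ on $\pd\om(s)$, which is where the factor $\gamma$ enters (the boundary trace $\|\pd_\eta\mathbf{u}\|_{\mathbf{H}^{1/2}(\pd\om(s))}$ is then bounded via \eqref{mod1b}). Your flattening produces the same $\gamma$-term as an interior advection instead; either bookkeeping works. Second, where you say the compatibility condition is ``inherited automatically'' because the pair solves a genuine divergence equation, the paper singles this out as a separate Claim and verifies it by an explicit integration-by-parts computation (in the appendix). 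Your argument is valid provided you really run it through difference quotients, so that at each stage you have an actual $H^1$ field satisfying the divergence equation and can integrate; stated for $\pd_s\mathbf{u}$ directly it would be circular, since you are invoking Theorem~\ref{thm1} precisely to produce that object.
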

\begin{proof}
Differentiating \eqref{mod1a} with respect to $s$, we get the system of equations
\begin{align*}
	-\beta^{-1}\nad\!\cdot\!\beta\nad(\pd_s\mathbf{u})_\ddagger+\nad (\pd_sq)=(\pd_s\mathbf{f})_\ddagger+\beta^{-1}\nad\!\cdot\!(\pd_s\beta)\nad\mathbf{u}-\beta^{-2}(\pd_s\beta)\nad\!\cdot\!\beta\nad\mathbf{u},\\
	-\beta^{-1}\nad\cdot\beta(\pd_s\mathbf{u})_\ddagger=\pd_sg+\beta^{-1}\nad\!\cdot\!(\pd_s\beta)\mathbf{u}-\beta^{-2}(\pd_s\beta)\nad\!\cdot\!\beta\mathbf{u}\quad\mbox{in}\quad\om(s),\\
	\pd_s\mathbf{u}=\pd_s\mathbf{h}-(\pd_sR)\pd_\eta\mathbf{u}\quad\mbox{on}\quad\pd\om(s).
\end{align*}
If the condition \eqref{compat} corresponding to the above system is satisfied, then we can apply Theorem \ref{thm1}.
\begin{cla}\label{Claim}
For every $s\in(0,1)$,
\begin{align*}
	\int\limits_{\om(s)}{\beta(\pd_sg+\beta^{-1}\nad\!\cdot\!(\pd_s\beta)\mathbf{u}-\beta^{-2}(\pd_s\beta)\nad\!\cdot\!\beta\mathbf{u})}\dif\sigma(\eta,\theta)\\+\int\limits_{0}^{2\pi}{\beta(\pd_s\mathbf{h}-(\pd_sR)\pd_\eta\mathbf{u})\cdot(R\be_1-(\pd_\theta R)\be_2)\dif\theta}=0.
\end{align*}
\end{cla}
The proof is presented in the appendix.
Applying Theorem \ref{thm1}, we find
\begin{gather*}
	\|(\pd_s\mathbf{u})_\ddagger\|_{\mathbf{H}^1(\om(s))}+\|\pd_s(q-\bar{q})\|_{L^2(\om(s))}\leq c(\|(\pd_s\mathbf{f})_\ddagger\|_{\mathbf{H}^{-1}(\om(s))}\\+\|\pd_sg\|_{L^2(\om(s))}+\|\pd_s\mathbf{h}\|_{\mathbf{H}^{1/2}(\pd\om(s))}+\lambda\|\nad\mathbf{u}\|_{\mathbf{L}^2(\om(s))}+\gamma\|\pd_\eta\mathbf{u}\|_{\mathbf{H}^{1/2}(\pd\om(s))})
\end{gather*}
where we have used \eqref{Hbnd1} and \eqref{3der}. Then we estimate $\|\nad\mathbf{u}\|_{\mathbf{L}^2(\om(s))}$ using \eqref{mod1a} and $\|\pd_\eta\mathbf{u}\|_{\mathbf{H}^{1/2}(\pd\om(s))}$ using \eqref{mod1b} to get \eqref{mod1c}.
\end{proof}

\subsection{The elliptic system}
The next theorem provides us the estimates for the model problem for scalar functions that appear in the asymptotic procedure. The results are standard (see e.g. \cite{LiMa72}) and hence the proof is omitted.
\begin{thm}\label{thm2}
Let there be given $f\in H^{-1}(\om(s))$ and $k\in H^{1/2}(\pd\om(s))$. Then there exists a unique $u\in H^{1}(\om(s))$ solving
\begin{equation}\label{modeq2a}
\begin{aligned}
	-\beta^{-1}\nad\cdot\beta\nad u=f\quad&\mbox{in}\quad\om(s),\\
u=k\quad\mbox{on}\quad\pd\om(s).
\end{aligned}
\end{equation}
The solution admits the following estimate:
\begin{align}\label{mod2a}
	\|u\|_{H^1(\om(s))}\leq c(\|f\|_{H^{-1}(\om(s))}+\|k\|_{H^{1/2}(\pd\om(s))}).
\end{align}
In general, if $f\in H^{n-2}(\om(s))$ and $k\in H^{n-1/2}(\pd\om(s))$ for $n\geq1$, then $u\in H^{n}(\om(s))$ satisfies
\begin{equation}\label{mod2b}
\begin{aligned}
	\|u\|_{H^n(\om(s))}\leq c(\|f\|_{H^{n-2}(\om(s))}+\|k\|_{H^{n-1/2}(\pd\om(s))}).
	\end{aligned}
\end{equation}
\end{thm}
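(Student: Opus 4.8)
The plan is to treat $-\beta^{-1}\nad\cdot\beta\nad$ as a uniformly elliptic operator in divergence form and to invoke the classical theory of linear elliptic boundary value problems. The structural fact that makes this work is that $\beta$ is bounded above and below by positive constants, uniformly in $s$ and $h$. Indeed $\beta=1-h\eta\,\bc''\cdot\be_1$ with $0\le\eta<R$, and by \eqref{2der} together with \eqref{hl} one has $h\eta|\bc''|\le h^{1/2}R\lambda^{1/2}=o(1)$, so $\beta\to1$ uniformly; in particular $0<\beta_*\le\beta\le\beta^*$ for suitable constants once $h$ is small.

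First I would remove the inhomogeneous boundary data. By the trace theorem there is a bounded lifting $Ek\in H^1(\om(s))$ of $k$ with $\|Ek\|_{H^1(\om(s))}\le c\|k\|_{H^{1/2}(\pd\om(s))}$; setting $u=w+Ek$ reduces the problem to finding $w\in H^1_0(\om(s))$ with $-\beta^{-1}\nad\cdot\beta\nad w=\tilde f$, where $\tilde f:=f+\beta^{-1}\nad\cdot\beta\nad Ek\in H^{-1}(\om(s))$ obeys $\|\tilde f\|_{H^{-1}(\om(s))}\le c(\|f\|_{H^{-1}(\om(s))}+\|k\|_{H^{1/2}(\pd\om(s))})$. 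I would then pass to the weak formulation obtained after multiplying by $\beta$: seek $w\in H^1_0(\om(s))$ such that
\[
\int\limits_{\om(s)}\beta\,\nad w\cdot\nad\phi\,\dif\sigma(\eta,\theta)=\langle\beta\tilde f,\phi\rangle\qquad\text{for all }\phi\in H^1_0(\om(s)).
\]
The associated bilinear form is bounded thanks to $\beta\le\beta^*$ and coercive on $H^1_0(\om(s))$, since $\int_{\om(s)}\beta|\nad w|^2\dif\sigma\ge\beta_*\|\nad w\|_{L^2(\om(s))}^2$ and the Poincar\'e inequality on the bounded domain $\om(s)$ bounds $\|w\|_{L^2(\om(s))}$ by $\|\nad w\|_{L^2(\om(s))}$. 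The Lax--Milgram lemma then gives a unique $w$, and coercivity yields $\|w\|_{H^1(\om(s))}\le c\|\tilde f\|_{H^{-1}(\om(s))}$; adding back $Ek$ proves existence and \eqref{mod2a}. Uniqueness for the full problem follows by applying coercivity to the difference of two solutions, which lies in $H^1_0(\om(s))$.

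For the higher-order estimate \eqref{mod2b} I would exploit the identity $\beta^{-1}\nad\cdot\beta\nad u=\Dd u-h\beta^{-1}\bc''\cdot\nad u$ (coming from $\nad\beta=-h\bc''$), already used in the proof of Theorem \ref{thm1}, to rewrite the equation as
\[
-\Dd u=f-h\beta^{-1}\bc''\cdot\nad u\quad\text{in}\quad\om(s),\qquad u=k\quad\text{on}\quad\pd\om(s),
\]
and then bootstrap with the standard a priori $H^n$-estimates for the Dirichlet Laplacian on the smooth domain $\om(s)$, see \cite{LiMa72}. Inductively, if $u\in H^{n-1}(\om(s))$ and $f\in H^{n-2}(\om(s))$, then the right-hand side $f-h\beta^{-1}\bc''\cdot\nad u$ lies in $H^{n-2}(\om(s))$, since $\beta^{-1}\bc''$ and its $(\eta,\theta)$-derivatives are smooth and bounded, so elliptic regularity promotes $u$ to $H^n(\om(s))$ and returns \eqref{mod2b}. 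Starting from $u\in H^1(\om(s))$ established above, this raises the regularity one order at a time.

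The existence and the weak estimate are routine; I expect the only genuine subtlety to be uniformity of the constant $c$ in $s$ and $h$. This requires that $\beta$ together with its derivatives be uniformly bounded and bounded away from zero (guaranteed by the standing assumptions \eqref{asmp}, \eqref{hl} and the physical restriction on the curvature) and that the lifting, Poincar\'e and elliptic-regularity constants be uniform across the family $\{\om(s)\}$; the latter follows from the smoothness of $R$ and the star-shapedness of each cross-section, which provide uniform control of the geometry. Handling this uniformity carefully, rather than the existence argument itself, is the main point to watch.
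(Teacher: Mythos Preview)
Your argument is correct and is precisely the standard route: Lax--Milgram for existence/uniqueness and the $H^1$-estimate, followed by the rewriting $-\Dd u=f-h\beta^{-1}\bc''\cdot\nad u$ and elliptic bootstrap for \eqref{mod2b}. The paper itself omits the proof entirely, declaring the result standard and referring to \cite{LiMa72}; your write-up supplies exactly the details that reference would give, and even mirrors the perturbation-of-the-Laplacian trick the authors use in the proof of Theorem~\ref{thm1}. Your closing remark about uniformity of the constants in $s$ and $h$ is the right thing to flag, since the paper uses \eqref{mod2a}--\eqref{mod2b} with constants independent of these parameters without comment.
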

Consequently, we have the following corollary.
\begin{cor}\label{cor2}
Given $f\in\mathcal{C}^2((0,1),H^{n}(\om(s)))$, $k\in\mathcal{C}^2((0,1),H^{n+3/2}(\pd\om(s))),$ the solution of \eqref{modeq2a} satisfies the estimates
\begin{equation}\label{mod2c}
\begin{aligned}
	\|\pd_su\|_{H^n(\om(s))}\leq c(\|\pd_sf\|_{H^{n-2}(\om(s))}+\|\pd_sk\|_{H^{n-1/2}(\pd\om(s))}\\+(\lambda+\gamma)(\|f\|_{H^{n-1}(\om(s))}+\|k\|_{H^{n+1/2}(\pd\om(s))}))
	\end{aligned}
\end{equation}
and
\begin{equation}
\begin{aligned}
	\|\pd_s^2u\|_{H^n(\om(s))}\leq c(\|\pd_s^2f\|_{H^{n-2}(\om(s))}+\|\pd_s^2k\|_{H^{n-1/2}(\pd\om(s))}\\+(\lambda+\gamma)(\|\pd_sf\|_{H^{n-1}(\om(s))}+\|\pd_sk\|_{H^{n+1/2}(\pd\om(s))})\\+(\lambda^*+\gamma^*+h^{-1/2}\lambda^{3/2}+\gamma^2)(\|f\|_{H^{n}(\om(s))}+\|k\|_{H^{n+3/2}(\pd\om(s))})).
	\end{aligned}
\label{mod2d}
\end{equation}
\end{cor}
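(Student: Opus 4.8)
The plan is to imitate the argument used for Corollary \ref{cor1}: differentiate the boundary value problem \eqref{modeq2a} in $s$, recognise the outcome as a problem of the same type, and apply Theorem \ref{thm2} at a regularity level adapted to the terms that appear. Since \eqref{modeq2a} is a scalar Dirichlet problem, no compatibility condition is required, so Theorem \ref{thm2} may be invoked directly; the only subtlety is that the domain $\om(s)$ moves with $s$, which produces corrections supported on $\pd\om(s)=\{\eta=R(\theta,s)\}$.

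First I would establish \eqref{mod2c}. Writing \eqref{modeq2a} as $-\nad\cdot\beta\nad u=\beta f$ and differentiating once in $s$ gives the interior equation
\[
-\beta^{-1}\nad\cdot\beta\nad(\pd_s u)=\pd_s f+\beta^{-1}\nad\cdot\bigl((\pd_s\beta)\nad u\bigr)-\beta^{-2}(\pd_s\beta)\,\nad\cdot\beta\nad u\quad\mbox{in}\quad\om(s),
\]
while differentiating the identity $u(R(\theta,s),\theta,s)=k$ along the moving boundary yields
\[
\pd_s u=\pd_s k-(\pd_s R)\pd_\eta u\quad\mbox{on}\quad\pd\om(s).
\]
Applying \eqref{mod2b} to this problem, I would bound the curvature factor $\pd_s\beta$ by $c\lambda$ (from \eqref{2derb}, noting $\nad(\pd_s\beta)=-h\bc'''_\ddagger$ carries the same bound by \eqref{3der}) and the boundary factor $\pd_s R$ by $\gamma$ (from \eqref{Hbnd1}), together with the trace inequality $\|\pd_\eta u\|_{H^{n-1/2}(\pd\om(s))}\le c\|u\|_{H^{n+1}(\om(s))}$. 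The remaining norms of $u$ are controlled by \eqref{mod2b}, and collecting terms produces the factor $(\lambda+\gamma)$ in front of $\|f\|_{H^{n-1}}+\|k\|_{H^{n+1/2}}$, which is exactly \eqref{mod2c}.

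For \eqref{mod2d} I would differentiate once more. The interior equation becomes
\[
-\beta^{-1}\nad\cdot\beta\nad(\pd_s^2 u)=\pd_s^2 f+\beta^{-1}\nad\cdot\bigl((\pd_s^2\beta)\nad u+2(\pd_s\beta)\nad(\pd_s u)\bigr)+\beta^{-1}(\pd_s^2\beta)f+2\beta^{-1}(\pd_s\beta)\pd_s f,
\]
and the twice-differentiated Dirichlet condition, again accounting for the $s$-dependence of $\{\eta=R(\theta,s)\}$, is
\[
\pd_s^2 u=\pd_s^2 k-(\pd_s^2 R)\pd_\eta u-2(\pd_s R)\pd_\eta(\pd_s u)-(\pd_s R)^2\pd_\eta^2 u\quad\mbox{on}\quad\pd\om(s).
\]
Invoking Theorem \ref{thm2}, I would bound $\pd_s^2\beta$ by $c(\lambda^*+h^{-1/2}\lambda^{3/2})$ via \eqref{2derb}, $\pd_s^2 R$ by $\gamma^*$, and $(\pd_s R)^2$ by $\gamma^2$. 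The terms carrying a single $s$-derivative of $u$ are treated by inserting \eqref{mod2c} (at level $n+1$ for the boundary trace $\pd_\eta\pd_s u$, which requires $\|\pd_s u\|_{H^{n+1}(\om(s))}$), and the terms carrying $u$ itself by \eqref{mod2b} at the levels $n$ and $n+2$ forced by the traces of $\pd_\eta u$ and $\pd_\eta^2 u$.

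The main obstacle is the final bookkeeping rather than any single estimate: the recursion through \eqref{mod2c} generates the cross products $\lambda(\lambda+\gamma)$ and $\gamma(\lambda+\gamma)$ multiplying $\|f\|_{H^n}+\|k\|_{H^{n+3/2}}$, which are not literally among the stated coefficients. These I would absorb into $\lambda^*+\gamma^*+h^{-1/2}\lambda^{3/2}+\gamma^2$ using the standing restriction \eqref{hl}: since $\lambda=o(h^{-1})$ one has $\lambda^2=\lambda^{1/2}\lambda^{3/2}\le c\,h^{-1/2}\lambda^{3/2}$, and by Young's inequality $\lambda\gamma\le\tfrac12(\lambda^2+\gamma^2)\le c(h^{-1/2}\lambda^{3/2}+\gamma^2)$. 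Substituting these bounds closes the estimate and yields \eqref{mod2d}.
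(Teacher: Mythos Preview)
Your proposal is correct and follows essentially the same route as the paper: differentiate \eqref{modeq2a} in $s$ once (resp.\ twice), recognise the result as a problem of the form \eqref{modeq2a} for $\pd_s u$ (resp.\ $\pd_s^2 u$), apply Theorem \ref{thm2} at the appropriate regularity level, and control the lower-order terms via \eqref{mod2b} and \eqref{mod2c} together with the bounds \eqref{Hbnd1}, \eqref{Hbnd2}, \eqref{3der}, \eqref{2derb}. Your rewriting $-\nad\cdot\beta\nad u=\beta f$ before differentiating yields a slightly tidier right-hand side than the paper's direct differentiation of $-\beta^{-1}\nad\cdot\beta\nad u=f$, and you are more explicit than the paper about absorbing the cross terms $\lambda^2$ and $\lambda\gamma$ into $h^{-1/2}\lambda^{3/2}+\gamma^2$ via \eqref{hl} and Young's inequality, but the substance is the same.
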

\begin{proof}
To prove \eqref{mod2c}, we follow identical steps as in the proof of Corollary \ref{cor1}. To prove \eqref{mod2d}, we derive \eqref{modeq2a} twice with respect to $s$ to obtain
\begin{align*}
	-\beta^{-1}\nad\!\cdot\!\beta\nad\pd_s^2u=\pd_s^2f+2\beta^{-1}\nad\!\cdot\!(\pd_s\beta)\nad\pd_su-2\beta^{-2}(\pd_s\beta)\nad\!\cdot\!\beta\nad\pd_su\\
	-2\beta^{-2}(\pd_s\beta)\nad\!\cdot\!(\pd_s\beta)\nad u+\beta^{-1}\nad\!\cdot\!(\pd_s^2\beta)\nad u-(\pd_s(\beta^{-2}\pd_s\beta))\nad\!\cdot\!\beta\nad u\quad\mbox{in}\quad\om(s),\\
	\pd_s^2u=\pd_s^2k-2(\pd_sR)\pd_\eta\pd_su-(\pd_s^2R)\pd_\eta u-(\pd_sR)^2\pd_\eta^2\mathbf{u}\quad\mbox{on}\quad\pd\om(s).
\end{align*}
Then we apply Theorem \ref{thm2} and use \eqref{3der}, \eqref{2derb}, \eqref{Hbnd1} and \eqref{Hbnd2} to get
\begin{align*}
	\|\pd_s^2u\|_{H^1(\om(s))}\leq c(\|\pd_s^2f\|_{H^{-1}(\om(s))}+(\lambda^2+\lambda^*+h^{-1/2}\lambda^{3/2})\|\nad u\|_{L^2(\om(s))}\\
	+\lambda\|\nad\pd_su\|_{L^2(\om(s))}+\|\pd_s^2k\|_{H^{1/2}(\pd\om(s))}+\gamma\|\pd_\eta\pd_s u\|_{H^{1/2}(\pd\om(s))}\\+\gamma^*\|\pd_\eta  u\|_{H^{1/2}(\pd\om(s))}+\gamma^2\|\pd_\eta^2u\|_{H^{1/2}(\pd\om(s))}).
\end{align*}
Estimating the right hand side with the help of \eqref{mod2a}, \eqref{mod2b} and \eqref{mod2c} and using \eqref{hl}, we arrive at \eqref{mod2d}.
\end{proof}

\subsection{The divergence equation}

In this subsection, we consider the divergence equation for two different cases of a curvilinear pipe having a variable cross-section. The divergence equation frequently appears in the study of flows and hence is an important auxiliary problem, see \cite{Galdi,Lady}. For the case of thin tubular domains, in the previous works starting with \cite{Naz90}, coordinate dilation and  uniform scaling of the transversal velocity components were sufficient to derive the specific estimates. See also \cite{Papi15}. The presence of curvature complicates our case, therefore, position dependent scaling involving the curvature dependent scale factor $\beta$ is introduced to tackle this problem.

Firstly, we present a Lemma about the divergence equation in a thin curvilinear pipe $\Omh$ laving length $1$.
\begin{lem}\label{Lem1}
Let there be $f\in L^2(\Omh)$ such that
\begin{equation}
\int\limits_{\Omh}{f}\dif\bx=0.
\label{perp}
\end{equation}
Then there exists a (non unique) solution $\bw\in \mathbf{H}^{1}(\Omh)$ of the divergence equation
\begin{equation}
\begin{aligned}
	-\na_\bx\cdot\bw&=f\mbox{ in }   \Omh,\\
							 \bw&=\bnt\mbox{ on }\pd\Omh,
\end{aligned}
\label{nabla}
\end{equation}
which obeys the estimate
\begin{equation}
\begin{aligned}
	\|\nab\bw_\ddagger\|_{L^2(\Omh)}+h^{-1}\|\bw_\ddagger\|_{L^2(\Omh)}+	h\|\nab w_3\|_{L^2(\Omh)}\\+\|\pd_sw_3\|_{L^2(\Omh)}+\|w_3\|_{L^2(\Omh)}\leq C\|f\|_{L^2(\Omh)},\\
	h^{-1}\|(\pd_s\bw_\ddagger)_\ddagger\|_{L^2(\Omh)}\leq C(1+\lambda)\|f\|_{L^2(\Omh)}
\end{aligned}
\label{lem1}
\end{equation}
for some constant $C$ independent of $f$ and $h$.
\end{lem}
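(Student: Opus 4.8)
The plan is to reduce the curvilinear divergence equation \eqref{nabla} to a standard divergence problem on a reference domain where classical Bogovskii-type estimates apply, and then track how the scaling parameters $h$ and the scale factor $\beta$ enter the bounds. First I would rewrite the divergence operator using the identity $-\na_\bx\cdot\bw = -h^{-1}\beta^{-1}\nad\cdot\beta\bw_\ddagger - \beta^{-1}\pd_s w_3$ recorded in \eqref{Div0}, so that \eqref{nabla} becomes $h^{-1}\nad\cdot\beta\bw_\ddagger + \pd_s w_3 = -\beta f$ in the scaled coordinates $(\eta,\theta,s)$. The compatibility condition \eqref{perp} translates, via $\dif\bx = \beta\,\dif\sigma\,\dif s$, into $\int \beta f\,\dif\bx = 0$ on the scaled cylinder, which is exactly the solvability condition for the rescaled right-hand side.

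The key step is to exploit the anisotropy. I would split the construction into a transversal part, solving the two-dimensional problem $-\beta^{-1}\nad\cdot\beta\bw_\ddagger = g$ on each cross-section $\om(s)$ with zero boundary data for a suitable $g$, and a longitudinal part absorbing $\pd_s w_3$. Following the strategy alluded to from \cite{Naz90,Papi15}, I would first choose $w_3$ to carry the mean of $f$ over each cross-section, reducing to a transversal divergence equation with a right-hand side of zero cross-sectional mean; the two-dimensional Bogovskii operator on $\om(s)$ then yields $\bw_\ddagger$ with the gain of a factor of $h$ coming from the $h^{-1}$ in front of $\nad\cdot\beta\bw_\ddagger$. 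This is what produces the asymmetric weights in \eqref{lem1}: the scaled transversal gradient $\nad = h\nab$ means $\|\nab\bw_\ddagger\| = h^{-1}\|\nad\bw_\ddagger\|$, and the Poincar\'e inequality on the thin cross-section gives the extra $h^{-1}$ on $\|\bw_\ddagger\|$. The boundedness of $\beta$ (guaranteed by the physical curvature restriction) makes all the $\beta$-weighted $L^2$ norms equivalent to the unweighted ones with constants independent of $h$.

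The second, harder bound in \eqref{lem1}, namely $h^{-1}\|(\pd_s\bw_\ddagger)_\ddagger\| \leq C(1+\lambda)\|f\|$, requires differentiating the transversal construction in $s$. I would differentiate the cross-sectional divergence problem with respect to $s$, exactly as in Corollary \ref{cor1}: the derivative $\pd_s$ hits both the domain $\om(s)$ (through $\pd_s R$, contributing $\gamma$) and the scale factor $\beta$ (through $\pd_s\beta$, which by \eqref{2derb} is controlled by $\lambda$). The commutator terms involving $\pd_s\beta$ are bounded using $|\pd_s\beta|\leq c\lambda$, which is the origin of the factor $(1+\lambda)$; the boundary contribution from $\pd_s R$ is absorbed into the constant $C$ since $\gamma$ is bounded. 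The main obstacle I anticipate is making the Bogovskii construction depend on $s$ in a controlled, differentiable way across the family of cross-sections $\om(s)$ — one needs the solution operator to be uniformly bounded and to have $s$-derivatives estimated uniformly in $h$, which is delicate precisely because the domain $\om(s)$ itself varies with $s$. A clean way to handle this is to map each $\om(s)$ to a fixed reference cross-section via the star-shapedness assumption, perform the Bogovskii construction once on the reference domain, and pull back; then all $s$-dependence is transferred to the (smooth, $s$-dependent) diffeomorphism and to $\beta$, whose derivatives are controlled by $\gamma$ and $\lambda$ as above.
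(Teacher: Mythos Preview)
Your approach is a plausible alternative, but the paper takes a more direct route. Rather than decomposing into longitudinal and transversal parts and applying a two-dimensional Bogovskii operator slice by slice, the paper introduces a single position-dependent substitution
\[
\hat{\bw}\;:=\;h^{-1}\beta\,\bw_\ddagger+w_3\,\bc'.
\]
A short computation shows $\na_\bx\cdot\bw=\beta^{-1}(\nad\cdot\hat{\bw}_\ddagger+\pd_s\hat{w}_3)$, so \eqref{nabla} is equivalent to the standard divergence problem $\dv\bar{\bw}=\beta f$ on the \emph{straight} reference cylinder $\Xi$ whose cross-sections are the $\om(s)$. One application of the classical three-dimensional result on the fixed Lipschitz domain $\Xi$ then gives $\|\hat{\bw}\|_{\mathbf{H}^1(\Xi)}\leq C\|\beta f\|_{L^2}$ with $C$ independent of $h$, and all the anisotropic weights in \eqref{lem1} fall out by undoing the substitution. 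In particular, since $\bw_\ddagger=h\beta^{-1}\hat{\bw}_\ddagger$, the factor $(1+\lambda)$ in the bound for $h^{-1}\|(\pd_s\bw_\ddagger)_\ddagger\|$ arises purely from the chain rule and $|\pd_s\beta^{-1}|\leq c\lambda$; no solution operator needs to be differentiated in $s$.

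Your slice-by-slice construction can be made to work, but it is heavier in exactly the place you anticipate. You must first build a longitudinal $w_3$ that vanishes on the lateral boundary $\Sih$ (a function of $s$ alone will not do, so a cross-sectional profile is needed), and then you need the family of two-dimensional Bogovskii operators on the moving domains $\om(s)$ to depend differentiably on $s$, which forces you to pull back to a fixed reference cross-section and track $\pd_sR$ and $\pd_s\beta$ separately. The paper's single global rescaling sidesteps both issues and also keeps $\gamma$ out of the explicit factor: the geometry of $\Xi$, including $\pd_sR$, is absorbed once and for all into the fixed-domain constant $C$, so your remark that ``$\gamma$ is bounded'' is not needed (and would not be available under the general hypotheses \eqref{asmp}). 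What your route would buy is a modular argument separating cross-sectional and longitudinal contributions; what the paper's route buys is brevity and the complete avoidance of any $s$-differentiation of the construction.
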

\begin{proof}
Noting the fact that $\nad\beta=-h\bc''$ and in accordance with the scaled parameter $\eta=h^{-1}r,$ we introduce the scaled function
\[\hat{\bw}=h^{-1}\beta\bw_\ddagger+w_3\bc'.\]
Thus, we have
\begin{align*}
	\na_\bx\cdot\bw&=h^{-1}\nad\cdot\bw_\ddagger+\beta^{-1}(\pd_sw_3-\bc''\cdot\bw_\ddagger)=\beta^{-1}(\nad\cdot\hat{\bw}_\ddagger+\pd_s\hat{w}_3)\\
								 &=\beta^{-1}(\pd_\eta\hat{w}_1+\eta^{-1}\hat{w}_1+\eta^{-1}\pd_\theta\hat{w}_2+\pd_s\hat{w}_3).
\end{align*}
Clearly, the terms within the brackets in the last equality represent the polar form of the divergence of a vector field defined in a straight cylinder. As a result, we can say that $\bw$ satisfies \eqref{nabla} if and only if $\bar{\bw}:=\hat{w}_1\hat{\boldsymbol{\eta}}+\hat{w}_2\hat{\boldsymbol{\theta}}+\hat{w}_3\hat{\mathbf{s}}$ (likewise for  $\hat{\boldsymbol{\eta}},\hat{\boldsymbol{\theta}}$ and $\hat{\mathbf{s}}$ being the unit vectors corresponding to cylindrical coordinates) satisfies the system
\begin{align*}
	\dv\bar{\bw}&=\beta f\mbox{ in }\Xi,\\
		 \bar{\bw}&=\bnt.
\end{align*}
Here $\Xi$ is a cylinder with a straight axis and given as 
\[\Xi:=\{\bx(\eta,\theta,s)=(\eta\cos\theta,\eta\sin\theta,s):0\leq\eta\leq h^{-1}R(\theta,s),0\leq\theta<2\pi,0<s<1\}.\]
For the function $\beta f,$ we have that
\begin{gather*}
	\int\limits_{\Xi}\!\!{\beta f}\dif\bx
	=\int\limits_0^1\int\limits_0^{2\pi}\int\limits_0^{h^{-1}R(\theta,s)}\!\!\beta f\eta\dif\eta\dif\theta\dif s
	=\int\limits_{\Omh}\!\!{f}\dif\bx=0.
\end{gather*}
Thus the compatibility condition is met by $\beta f.$

Therefore, by a classical result on the divergence equation (see \cite{LaSo76}) in a fixed Lipschitz domain, we have $\bar{\bw}\in\mathbf{H}^1(\Xi)\Rightarrow\hat{\bw}\in\mathbf{H}^1(\Omh)$ and for a constant $C$ independent of the data, the estimate
\begin{gather*}
	\|\nad\hat{\bw}_\ddagger\|_{\mathbf{L}^2(\Omh)}+\|(\pd_s\hat{\bw}_\ddagger)_\ddagger\|_{\mathbf{L}^2(\Omh)}+\|\hat{\bw}_\ddagger\|_{\mathbf{L}^2(\Omh)}\\+	\|\nad\hat{w}_3(f)\|_{\mathbf{L}^2(\Omh)}+\|\pd_s\hat{w}_3(f)\|_{L^2(\Omh)}+\|\hat{w}_3(f)\|_{L^2(\Omh)}
	\leq C\|f\|_{L^2(\Omh)}.
\end{gather*}
Owing to the bounds \eqref{3der} and \eqref{2der}, the above leads us to \eqref{lem1}.
\end{proof}

We present another lemma on the divergence equation restricted to a length of a pipe that is comparable to the thickness of the pipe. The estimate in this case is modified as compared to that in Lemma \ref{Lem1} due to the differing aspect ratio of the segment of the curvilinear pipe in question. Let us consider \eqref{nabla} and \eqref{perp} restricted to the domain $\Omh_{end}\!:=\!\{\bx(r,\theta,s)\!\in\!\Omh\!:\!0\!<\!s\!<\!l,l=O(h)\}.$
\begin{lem}\label{Lem2}
Let $f\in L^2(\Omh_{end})$ satisfy
\begin{equation}
\int\limits_{\Omh_{end}}{f}\dif\bx=0.
\label{perpend}
\end{equation}
Then there exists a (non unique) solution $\bw\in \mathbf{H}^{1}(\Omh_{end})$ of the divergence equation
\begin{equation}
\begin{aligned}
	-\na_\bx\cdot\bw&=f\mbox{ in }   \Omh_{end},\\
							 \bw&=\bnt\mbox{ on }\pd\Omh_{end},
\end{aligned}
\label{nablaend}
\end{equation}
which obeys the estimate
\begin{equation}
\begin{aligned}
	\|\nab\bw_\ddagger\|_{\mathbf{L}^2(\Omh_{end})}+\|(\pd_s\bw_\ddagger)_\ddagger\|_{\mathbf{L}^2(\Omh_{end})}+h^{-1}\|\bw_\ddagger\|_{\mathbf{L}^2(\Omh_{end})}\\+	\|\nab w_3\|_{\mathbf{L}^2(\Omh_{end})}+\|\pd_sw_3\|_{L^2(\Omh_{end})}+h^{-1}\|w_3\|_{L^2(\Omh_{end})}\leq C\|f\|_{L^2(\Omh_{end})},
\end{aligned}
\label{lem2}
\end{equation}
for some constant $C$ independent of $f$ and $h.$
\end{lem}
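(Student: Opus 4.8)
Since $\Omh_{end}$ is of order $h$ in \emph{both} the transversal and the longitudinal directions, the plan is to reuse the reduction to a straight reference domain from the proof of Lemma \ref{Lem1}, but to additionally stretch the longitudinal variable so that the rescaled segment acquires unit aspect ratio. Scaling only the cross section, as in Lemma \ref{Lem1}, would turn $\Omh_{end}$ into a pancake-shaped domain of thickness $O(h)$, whose constant in the classical divergence estimate degenerates as $h\to0$; stretching both directions by $h^{-1}$ keeps the reference domain nondegenerate and produces the symmetric estimate \eqref{lem2}.

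Concretely, writing $l=\kappa h$ with $\kappa$ bounded above and below, I would introduce the stretched variables $\eta=h^{-1}r$, $\tau=s/l$ and the rescaled field
\[
\tilde{\bw}:=h^{-1}\beta\,\bw_\ddagger+l^{-1}w_3\,\hat{\mathbf{s}},
\]
which maps $\Omh_{end}$ onto the fixed reference domain $\Pi:=\{(\eta,\theta,\tau):0\le\eta<R(\theta,l\tau),\ \theta\in[0,2\pi),\ \tau\in(0,1)\}$ with $\hat{\mathbf{s}}$ its axial unit vector. Using $\nad\beta=-h\bc''$ and $\pd_s=l^{-1}\pd_\tau$ exactly as in Lemma \ref{Lem1}, a direct computation gives $\na_\bx\cdot\bw=\beta^{-1}\dv\tilde{\bw}$, where $\dv$ is the divergence in the straight coordinates $(\eta,\theta,\tau)$; hence \eqref{nablaend} is equivalent to $-\dv\tilde{\bw}=\beta f$ in $\Pi$ with $\tilde{\bw}=\bnt$ on $\pd\Pi$. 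The compatibility condition $\int_\Pi\beta f=0$ is precisely \eqref{perpend} rewritten through the volume element $\dif\bx=h^2l\,\beta\,\eta\,\dif\eta\,\dif\theta\,\dif\tau$.

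I would then invoke the classical result on the divergence equation in the fixed Lipschitz domain $\Pi$ (as in \cite{LaSo76}) to obtain $\tilde{\bw}\in\mathbf{H}^1(\Pi)$ with $\|\na\tilde{\bw}\|_{\mathbf{L}^2(\Pi)}\le C\|\beta f\|_{L^2(\Pi)}$ for a constant $C$ independent of $h$. Unscaling via the volume element above, the chain rule identifies the entries of $\na\tilde{\bw}$ with those of $\nab\bw_\ddagger$, $(\pd_s\bw_\ddagger)_\ddagger$, $\nab w_3$ and $\pd_sw_3$, up to the factor $\beta\approx1$ and frame-rotation corrections of order $h|\bc''|=O(h^{1/2}\lambda^{1/2})=o(1)$ controlled by \eqref{2der} and \eqref{hl}, exactly as in the passage from the straight-cylinder estimate to \eqref{lem1}. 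Tracking the powers of $h$ and $l\asymp h$ then turns $\|\na\tilde{\bw}\|_{\mathbf{L}^2(\Pi)}\le C\|\beta f\|_{L^2(\Pi)}$ into the four gradient terms of \eqref{lem2} bounded by $C\|f\|_{L^2(\Omh_{end})}$. Finally, since $\tilde{\bw}$ vanishes on all of $\pd\Pi$, the Poincar\'e inequality on $\Pi$ gives $\|\tilde{\bw}\|_{\mathbf{L}^2(\Pi)}\le C\|\na\tilde{\bw}\|_{\mathbf{L}^2(\Pi)}$, which after unscaling produces the remaining terms $h^{-1}\|\bw_\ddagger\|_{\mathbf{L}^2(\Omh_{end})}$ and $h^{-1}\|w_3\|_{L^2(\Omh_{end})}$, completing \eqref{lem2}.

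The step I expect to be the main obstacle is verifying that the constant $C$ in the divergence estimate on $\Pi$ is genuinely independent of $h$, that is, that the family $\{\Pi\}_h$ is uniformly Lipschitz. This should follow because the lateral wall $\eta=R(\theta,l\tau)$ converges uniformly, as $h\to0$, to the fixed profile $\eta=R(\theta,0)$: indeed $l\tau\le l=O(h)\to0$ and $R$ is Lipschitz in $s$ with constant $\gamma$ from \eqref{Hbnd1}, so for small $h$ each $\Pi$ is a uniformly small Lipschitz perturbation of the straight cylinder with cross section $\{\eta<R(\theta,0)\}$. Together with $\beta\to1$ uniformly (again from \eqref{2der} and \eqref{hl}) this secures an $h$-independent constant. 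The only remaining bookkeeping is the rotating frame $\{\be_1,\be_2,\bc'\}$, whose contribution to $(\pd_s\bw_\ddagger)_\ddagger$ is of lower order and absorbed exactly as in Lemma \ref{Lem1}.
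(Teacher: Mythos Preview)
Your proposal is correct and follows essentially the same route as the paper: the paper's proof introduces the scaled longitudinal variable $\tau=h^{-1}s$ alongside $\eta=h^{-1}r$ and the scaled field $\hat{\bw}=\beta\bw_\ddagger+w_3\bc'$, then defers to the argument of Lemma~\ref{Lem1}. Your version differs only in cosmetic normalization (you build the $h^{-1}$ factors into $\tilde{\bw}$ rather than letting them emerge from the change of variables), and you spell out the uniform-Lipschitz and Poincar\'e steps that the paper leaves implicit; the substance is the same.
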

\begin{proof}
We introduce the scaled parameters $\eta=h^{-1}r$ and $\tau=h^{-1}s$ and the scaled function
\[\hat{\bw}=\beta\bw_\ddagger+w_3\bc'.\]
The rest of the proof follows the steps in the proof of Lemma \ref{Lem1} and we get the required estimate.
\end{proof}

\section{Formal asymptotic procedure}

Let us consider the asymptotic Ans\"atze:
\begin{equation}
\begin{aligned}
	p^h(r,\theta,s)&=h^{-3}	 p^0(s)+h^{-2}p^1(\eta,\theta,s)+h^{-1}p^2(\eta,\theta,s)+\ldots,\\
\bv^h(r,\theta,s)&=h^{-1}\bv^1(\eta,\theta,s)+h^0 \bv^2(\eta,\theta,s)+\ldots.
\end{aligned}
\label{Ans}
\end{equation}
Having an $O(h^{-1})$ velocity still results in an $O(h)$ flux through the cross-sections so that ignoring the convective term in the Navier-Stokes equations can still be justified.  

The first step of matching coefficients of the leading order of $h$ in \eqref{Stokesdd}, \eqref{Div0} and \eqref{NoSlip} produces the following system of equations:
\begin{align*}
-\beta^{-1}\nad\cdot\beta\nad\bv^1_\ddagger+\nad p^1=\bnt_\ddagger,\quad-\beta^{-1}\nad\cdot\beta\bv^1_\ddagger=0\quad\mbox{in}\quad\om(s),\\
\bv^1_\ddagger=\bnt_\ddagger\quad\mbox{on}\quad\pd\om(s).
\end{align*}
The solution is of the form $\bv^1_\ddagger=\bnt_\ddagger$ and $p^1=p^1(s).$

For the third component, due to \eqref{Stokes3} and \eqref{NoSlip}, we have the equations
\begin{equation}
\begin{aligned}
-\beta^{-1}\nad\cdot\beta\nad v^1_3+\beta^{-1}\pd_sp^0=0\quad&\mbox{in}\quad\om(s),\\
v^1_3=0\quad\mbox{on}\quad\pd\om(s).
\end{aligned}
\label{55}
\end{equation}
Hence, we have as a solution
\begin{equation}
v^1_3=-\frac{1}{2}\Psi(\eta,\theta,s)\pd_sp^0(s),
\label{Pois}
\end{equation}
where $\Psi$ is a function (Prandtl function in case of $\beta\equiv 1$) satisfying
\begin{equation}
-\beta^{-1}\nad\cdot\beta\nad\Psi=2\beta^{-1}\quad\mbox{in}\quad\om(s),\quad\Psi=0\quad\mbox{on}\quad\pd\om(s).
\label{Pran}
\end{equation}
For the solution of \eqref{Pran}, due to \eqref{mod2a}, we have the estimate
\begin{equation}
\|\Psi\|_{H^1(\om(s))}\leq c\|\beta^{-1}\|_{H^{-1}(\om(s))}\leq c.
\label{psi}
\end{equation}
Applying Corollary \ref{cor2} and using \eqref{2der} and \eqref{2derb}, we obtain the additional estimates
\begin{align}\label{dpsi}
	\|\pd_s\Psi\|_{L^2(\om(s))}&\leq c[\|\pd_s(\beta^{-1})\|_{H^{-1}(\om(s))}+(\lambda+\gamma)\|\beta^{-1}\|_{L^{2}(\om(s))}]\leq c(\lambda+\gamma).\\\label{ddpsi}
\|\pd_s^2\Psi\|_{L^2(\om(s))}&\leq c[\|\pd_s^2(\beta^{-1})\|_{H^{-1}(\om(s))}+(\lambda+\gamma)\|\pd_s(\beta^{-1})\|_{L^{2}(\om(s))}\\\nonumber
														 &+(\lambda^*+\gamma^*+h^{-1/2}\lambda^{3/2}+\gamma^2)\|\beta^{-1}\|_{H^{1}(\om(s))}]\\\nonumber
														 &\leq c(\lambda^*+\gamma^*+h^{-1/2}\lambda^{3/2}+\gamma^2).
\end{align}

We also need the boundedness of the functions $\Psi$ and $\Psi^{-1}$ to proceed further and hence we present the following proposition.

\begin{prop}
There exist constants $C_1,C_2>0$ dependent on the domain $\om$ such that \[C_1\leq\Psi\leq C_2.\]
\label{prop}
\end{prop}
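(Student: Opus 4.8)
The plan is to recast \eqref{Pran} in pure divergence form and then run the comparison (maximum) principle for the uniformly elliptic operator it defines. Multiplying \eqref{Pran} by $\beta$ turns it into $-\nad\cdot\beta\nad\Psi=2$ in $\om(s)$ with $\Psi=0$ on $\pd\om(s)$. First I would record the two structural facts that drive everything. (i) The scale factor $\beta=1-h\eta\,\bc''\cdot\be_1$ is uniformly two-sided bounded, $0<\beta_-\le\beta\le\beta_+$, uniformly in $s$ and $h$, because $\eta\le R$ is bounded and $h|\bc''|\le ch^{1/2}\lambda^{1/2}=o(1)$ by \eqref{2der} and \eqref{hl}; hence $-\nad\cdot\beta\nad\,\cdot$ is uniformly elliptic and enjoys weak and strong maximum principles. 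Moreover $\pd_\eta\beta=-h\,\bc''\cdot\be_1=o(1)$. (ii) Each $\om(s)$ is star-shaped about the origin $\eta=0$ with radial boundary $\eta=R(\theta,s)$, and $0<R_{\min}\le R\le R_{\max}$ on the compact parameter set, so we have the uniform inclusions $B_{R_{\min}}\subseteq\om(s)\subseteq B_{R_{\max}}$, where $B_\rho=\{\eta<\rho\}$.

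For the upper bound I would exhibit an explicit radial supersolution. Taking $W(\eta)=A\,(R_{\max}^2-\eta^2)$ and computing in polar form gives $-\nad\cdot\beta\nad W=2A\,(2\beta+\eta\,\pd_\eta\beta)$; since $2\beta\ge2\beta_-$ while $\eta\,\pd_\eta\beta=o(1)$, the bracket is $\ge\beta_-$ for $h$ small, so choosing $A=2/\beta_-$ makes $-\nad\cdot\beta\nad W\ge2$. As $W\ge0=\Psi$ on $\pd\om(s)$, the comparison principle applied to $W-\Psi$ yields $\Psi\le W\le A R_{\max}^2=:C_2$, with $C_2$ depending only on $R_{\max}$ and $\beta_-$ and hence uniform in $s$.

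For the lower bound I would argue in two steps. Positivity $\Psi\ge0$ follows from comparison with the constant $0$ (the right-hand side $2$ is positive and there is no zeroth-order term), and the strong maximum principle upgrades this to $\Psi>0$ in the interior. To make this quantitative and uniform I would compare $\Psi$ from below, on the inscribed disk $B_{R_{\min}}$, with the explicit radial subsolution $w(\eta)=A'\,(R_{\min}^2-\eta^2)$ with $A'=\tfrac{1}{3\beta_+}$: the same polar computation gives $-\nad\cdot\beta\nad w=2A'\,(2\beta+\eta\,\pd_\eta\beta)\le2$ for $h$ small, while $w=0\le\Psi$ on $\pd B_{R_{\min}}$, so the comparison principle gives $\Psi\ge w$ on $B_{R_{\min}}$ and therefore $\Psi\ge A'\tfrac34 R_{\min}^2=\tfrac{R_{\min}^2}{4\beta_+}=:C_1$ on the fixed disk $B_{R_{\min}/2}$, uniformly in $s$. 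Integrating this over $B_{R_{\min}/2}$ then bounds $G(s)=2\int_{\om(s)}\Psi\,\eta\,\dif\eta\,\dif\theta$ away from $0$, which is the use to which the bound is put downstream.

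The comparisons themselves are routine once the barriers are chosen; the only real point is \emph{uniformity}, namely that every constant ($\beta_\pm$, $R_{\min}$, $R_{\max}$, and hence $C_1,C_2$) be independent of $s$ and $h$, and this is exactly what the uniform ellipticity of $\beta$ and the uniform inscribed/circumscribed radii in (i)--(ii) provide. I would also flag the one subtlety in reading the statement: since $\Psi$ vanishes on $\pd\om(s)$, a genuine pointwise bound $C_1\le\Psi$ cannot persist up to the boundary, so the lower inequality has to be understood as strict interior positivity together with the uniform bound on the fixed inscribed disk; correspondingly, the control of $\Psi^{-1}$ only enters the later analysis through integrated quantities such as $G(s)$.
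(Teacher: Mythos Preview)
Your proof is correct and takes a genuinely different route from the paper. The paper represents $\Psi$ via the Green's function of $-\beta^{-1}\nad\cdot\beta\nad$, writing $\Psi(x)=2\int_{\om}\mathcal{G}(x,y)\,\dif y$, and then invokes two-sided pointwise estimates of Littman--Stampacchia--Weinberger type: a logarithmic upper bound on $\mathcal{G}$ everywhere, and a matching lower bound valid only for $|x-y|\le\tfrac12\,\mathrm{dist}(y,\pd\om)$. Integrating these yields the upper bound on $\Psi$ directly and an interior lower bound which, as you note, necessarily decays to zero at $\pd\om$. Your approach bypasses Green-function theory entirely by constructing explicit radial barriers on the circumscribed and inscribed disks $B_{R_{\max}}\supseteq\om(s)\supseteq B_{R_{\min}}$ and applying the comparison principle for the uniformly elliptic operator $-\nad\cdot\beta\nad$. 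This is more elementary and produces explicit constants in terms of $\beta_\pm$, $R_{\min}$, $R_{\max}$; it does rely on the star-shapedness of $\om(s)$ about the origin (which the paper assumes), whereas the Green-function argument would go through on an arbitrary Lipschitz cross-section. You also rightly flag the imprecision in the stated lower bound: since $\Psi=0$ on $\pd\om(s)$, a uniform $C_1\le\Psi$ cannot hold on the closure. The paper's argument via the near-diagonal estimate is subject to exactly the same restriction but does not say so; as you observe, the only downstream use is the uniform positivity of $G(s)$, for which your interior bound on $B_{R_{\min}/2}$ is more than enough.
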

\begin{proof}
For a bounded domain $\om$, let us consider a general elliptic operator
\[L=-\sum\limits_{i,j=1}^2{\pd_{x_i}(a^{ij}\pd_{x_i})}.\]
The coefficients $a^{ij}$ are real-valued from $L^\infty(\om)$ and satisfy
\[\mu_1|\xi|^2\leq\sum\limits_{i,j=1}^2a^{ij}\xi_i^*\xi_j\leq\mu_2|\xi|^2\]
for some $\mu_1,\mu_2>0.$ Let $\mathcal{G}=\mathcal{G}(x,y)$ be the Green's function for $L$ with the homogeneous boundary condition on $\pd\om.$ Then, $\mathcal{G}>0$ and for $x,y\in\om$,
\begin{equation}
\mathcal{G}(x,y)\leq C_1(|\mathrm{ln}|x-y||+1)
\label{C1}
\end{equation}
and
\begin{equation}
\mathcal{G}(x,y)\geq C_2(|\mathrm{ln}|x-y||+1)
\label{C2}
\end{equation}
for $|x-y|\leq\frac{1}{2}\mathrm{dist}(y,\pd\om).$ By results in \cite{LiStWe}, it is sufficient to verify this for the Laplacian for which it is known. Here, $C_1$ and $C_2$ are positive constants that depend only on $\mu_1$ and $\mu_2$. The function $\Psi$ is represented as \[\Psi(x)=2\int\limits_\om{\mathcal{G}(x,y)\dif y,}\] where $\mathcal{G}$ now represents the Green's function for the operator $\beta^{-1}\nad\cdot\beta\nad.$ The required estimate follows from \eqref{C1} and \eqref{C2}.
\end{proof}

We define the generalized torsional rigidity
\begin{equation*}
G(s):=2\!\!\int\limits_{\om(s)}\!\!{\Psi(\eta,\theta,s)\dif\sigma(\eta,\theta)}=\int\limits_{\om(s)}\!\!{\beta(\eta,\theta,s)|\nad\Psi(\eta,\theta,s)|^2\dif\sigma(\eta,\theta)}>0.
\end{equation*}
Due to this definition and the boundedness of the domain $\omega(s)$,  Proposition \ref{prop} guarantees the existence of constants $A,B$ such that $0<A\leq G(s)\leq B<\infty$ for all $s\in[0,1].$

Now we consider the next step in the asymptotic procedure, that is to compare the coefficients of the next order terms. We have
\begin{equation}
\begin{aligned}
	-\beta^{-1}\nad\cdot\beta\nad\bv^2_\ddagger+\nad p^2=\beta^{-2}h\bc'''_\ddagger v^1_3+\beta^{-3}h\bc''(2\beta\pd_s v^1_3-v^1_3\pd_s\beta),\\\quad-\nad\cdot\beta\bv^2_\ddagger=\pd_sv^1_3\quad\mbox{in}\quad\om(s),\quad
\bv^2_\ddagger=\bnt_\ddagger\quad\mbox{on}\quad\pd\om(s).
\end{aligned}
\label{56}
\end{equation}

Owing to the zero boundary conditions for $\bv^2_\ddagger$ and $\Psi$, we get the compatibility condition for this problem
\begin{align*}
	0&=\int\limits_{\om(s)}\!\!{\nad\cdot\beta\bv^2_\ddagger\dif\sigma(\eta,\theta)}=-\int\limits_{\om(s)}\!\!{\pd_sv^1_3\dif\sigma(\eta,\theta)}\\&=\frac{1}{2}\!\!\int\limits_{\om(s)}\!\!{\pd_s(\Psi(\eta,\theta,s)\pd_sp^0(s))\dif\sigma(\eta,\theta)}=\frac{1}{2}\pd_s(\!\!\int\limits_{\om(s)}\!\!{\Psi(\eta,\theta,s)\dif\sigma(\eta,\theta)}\pd_sp^0(s)).
\end{align*}
Thus, we have derived the modified Reynolds equation
\begin{equation}
-\pd_s(G(s)\pd_sp^0(s))=0,\quad s\in(0,1).
\label{Rey}
\end{equation}
\begin{Rem}
In the absence of curvature, i.e. $\beta\equiv1$, \eqref{Rey} is the classical Reynolds equation, cf. \cite{NaPi90}.
\end{Rem}
which motivates the imposition of the boundary flux condition
\begin{align}\nonumber
\int\limits_{\om(0)}\!\!v_3^h(r,\theta,0)\dif\sigma(\eta,\theta)&=h^{-1}F^0\\\label{BCR}
\Rightarrow-G(0)\pd_sp^0(0)&=4F^0.
\end{align}
We can also argue to impose the condition
\begin{equation}
p^0(1)=p^0_{per}.
\label{per}
\end{equation}

The mixed boundary problem stated in \eqref{Rey}, \eqref{BCR} and \eqref{per} has the solution \[p^0(s)=p^0_{per}+4F^0\int\limits_s^1{G(t)^{-1}\dif t}.\]
This leads to
\[|\pd_sp^0(s)|=|4F^0G(s)^{-1}|\leq c\]
as well as
\[|\pd_s^2p^0(s)|=|4F^0G(s)^{-2}\pd_sG(s)|\leq c\|\pd_s\Psi\|_{L^2(\om(s))}\leq c(\lambda+\gamma)\]
where we used \eqref{dpsi}. Similarly,
\[|\pd_s^3p^0(s)|\leq c(\|(\pd_sR)\pd_s\Psi\|_{L^2(\pd\om(s))}+\|\pd_s^2\Psi\|_{L^2(\om(s))}\leq c(\lambda^*+\gamma^*+h^{-1/2}\lambda^{3/2}+\gamma^2)\]
where we have used \eqref{ddpsi} and the fact that $\|\pd_s\Psi\|_{L^2(\pd\om(s))}\leq c(\|\pd_s\Psi\|_{H^1(\pd\om(s))})$.

As a result of the above along with \eqref{psi}, \eqref{dpsi} and \eqref{ddpsi}, \eqref{Pois} gives us
\begin{equation}
\begin{aligned}
	\|v^1_3\|\leq ch,\quad\quad\|\pd_sv^1_3\|\leq ch(\lambda+\gamma)\\\quad\mbox{and}\quad\|\pd_s^2v^1_3\|\leq ch(\lambda^*+\gamma^*+h^{-1/2}\lambda^{3/2}+\gamma^2).
\end{aligned}
\label{bnds:v3,p0}
\end{equation}

Here and henceforth, $\|\cdot\|$ shall denote the usual norm on $L^2(\Omh)$ and all constants $c$ will be of the form $c=C(|F^0|+|p^0_{per}|)$ where $C$ is independent of $F^0$, $p^0_{per}$.

Consequently, by \eqref{mod1a}, since $\pd_sv^1_3\in L^2(\om(s))$ and $v^1_3\in L^2(\om(s))\subset H^{-1}(\om(s)),$ denoting the average of $p^2$ over the cross-section by $\bar{p}^2$ we have for the solution of \eqref{56},
\begin{gather}\nonumber
\|\bv^2_\ddagger\|_{H^1(\om(s))}+\|p^2-\bar{p}^2\|_{L^2(\om(s))}\leq c(\lambda\|v^1_3\|_{H^{-1}(\om(s))}+\|\pd_sv^1_3\|_{L^2(\om(s))})\\\label{bnds:v2,p2}
\Rightarrow\|\nad\bv^2_\ddagger\|+\|\bv^2_\ddagger\|+\|p^2-\bar{p}^2\|\leq ch(\lambda+\gamma)
\end{gather}
and similarly by \eqref{mod1c}
\begin{gather}\nonumber
\|(\pd_s\bv^2_\ddagger)_\ddagger\|_{H^1(\om(s))}+\|\pd_s(p^2-\bar{p}^2)\|_{L^2(\om(s))}\leq ch(\lambda^*+\gamma^*+h^{-1/2}\lambda^{3/2}+\gamma^2)\\\label{bnds:dv2}
\Rightarrow\|\pd_s\bv^2_\ddagger\|=\|(\pd_s\bv^2_\ddagger)_\ddagger-\bc''\cdot\bv^2_\ddagger\bc'\|\leq ch(\lambda^*+\gamma^*+h^{-1}\lambda+\gamma^2).
\end{gather}

\section{Boundary conditions at the ends}\label{bndry}

In order to solve the Stokes problem, we need to specify appropriate boundary conditions at the inlet and the outlet. We consider the domain $\Omh$ to be an arbitrarily chosen segment of a much larger pipe in which the fluid is injected at one end and it flows out at the other. Such conditions at the end cross-sections are extremely difficult to model reasonably hence we restrict ourselves to the chosen segment, possibly far away from the ends. Imposing artificial boundary conditions at the ends of the chosen segment gives rise to the boundary layer phenomena near those ends. It brings about a quick variability near the end cross-sections in the solution $\{\bv^h,p^h\}$ of the problem. Although, from a practical point of view, it is absurd to expect such quick variability at arbitrarily chosen portions of the full pipe.

The function of the boundary layer terms in the solutions is to reduce the discrepancy in the artificial boundary conditions. We want to impose such boundary conditions which make the discrepancy as small as possible. One can of course formulate elaborate sets of conditions to achieve this. We however, opt for the simpler way of preparing the boundary data in accordance to our approximations. We take the traces of our approximate fields at the end cross-sections and use them as the boundary data. Thus we reduce the discrepancy at the boundaries to zero while also diminishing the error estimates.

\subsection{Boundary conditions on the cross-section $\omega^h(0)$.}
We note that the components of the Ans\"atze \eqref{Ans} at the point $s=0$ are completely determined by the data of the problem. Indeed, according to the boundary condition \eqref{BCR} we have
\[
v^1_3(\eta,\theta,0)=-\frac{1}{2}\Psi(\eta,\theta,0)\pd_sp^0(0)=2G(0)^{-1}F^0\Psi(\eta,\theta,0).
\]

In this case, the right-hand side of the continuity equation,
\[
\pd_s\Psi(\eta,\theta,0)\pd_sp^0(0)+\Psi(\eta,\theta,0)\pd_s^2p^0(0),
\]
can be evaluated by using the boundary condition \eqref{BCR} and the differential equation \eqref{Rey}.

Thus, we should take the boundary conditions
\begin{align}\label{37}
v^h_3(r,\theta,0)&=2h^{-1}G(0)^{-1}F^0\Psi(\eta,\theta,0),\\\label{38}
\bv^h_\ddagger(r,\theta,0)&=\bnt_\ddagger\quad\mbox{on}\quad\om^h(0).
\end{align}

\subsection{Boundary conditions on the cross-section $\omega^h(1)$.}

Since the fluxes through the cross-sections do not change, the expressions (so called velocities of pseudo-deformations) generated by the ansatz \eqref{Ans},
\begin{equation}\label{34}
h^{-1}\beta^{-1}\pd_s v^1_3(\eta,\theta,1)-h^{-3}p^0(1),\quad h^{-1}\beta^{-1}\pd_sv^2_j(\eta,\theta,1),j=1,2,
\end{equation}
can be also evaluated by using the problem's data. The term $h^{-3}p^0(1)$ is essentially ($h^{-2}$ times) larger than the other in \eqref{34}. Therefore, we can take
\begin{equation}
\beta^{-1}\pd_s v^h_3(r,\theta,1)-p^h(r,\theta,1)=-p^h_{per}\quad\mbox{on}\quad\om^h(1)
\label{35}
\end{equation}as one of the boundary conditions on the cross-section $\omega^h(1)$ with $p^h_{per}=h^{-3}p^0_{per}.$ We emphasize that the pressure itself can be taken in the boundary condition since it does not appear in the Green formula for the Stokes system alone. Further, we complement \eqref{35} by the following conditions:
\begin{equation}
v^h_\ddagger(r,\theta,1)=\bnt_\ddagger\quad\mbox{on}\quad\om^h(1).
\label{36}
\end{equation}
Due to the continuity equation \eqref{Div} and relation \eqref{36}, we have
\[
0=\na_\bx\cdot v^h(r,\theta,1)=\beta^{-1}\pd_sv^h_3(r,\theta,1)
\]
and hence the boundary conditions \eqref{35} and \eqref{36} lead to a constant pressure on the cross-section $\omega^h(1)$.

\section{Estimates of the asymptotical remainder terms in the Stokes problem.}

Recall that the solution $(\bv^h,p^h)$ of the problem \eqref{Stokes}-\eqref{NoSlip} is represented as
\begin{equation*}
\bv^h=\bbv+\vrem\quad\mbox{and}\quad p^h=\bbp+\prem
\end{equation*}
where we take the approximate solution to be
\begin{equation*}
\begin{aligned}
					\bbp(\bx)&=h^{-3}p^0(s),\\
\bbv(\bx)&=h^{-1}v^1_3(\eta,\theta,s)\bc'(s)+X^h(s)\bv^2_\ddagger(\eta,\theta,s),\\
\end{aligned}
\end{equation*}
where  $X^h\in C^\infty(0,1)$ is a cut-off function such that $0\leq X^h\leq 1$, $|\pd^p_sX^h(s)|\leq ch^{-p}$,
\begin{equation*}
X^h(s)=
\begin{cases}
1 & \mbox{ when }s\in(2h,1-2h),\\
0 & \mbox{ when }s\in(0,h)\cup(1-h,1)
\end{cases}
\end{equation*}
and $v^1_3$ and $\bv^2_\ddagger$ are solutions of \eqref{55} and \eqref{56} respectively. Inclusion of the term $\bv^2_\ddagger$ makes the approximate velocity divergence-free inside the channel away from the ends. Moreover, the order of magnitude of $|\bv^2_\ddagger|$ grows closer to that of the leading term $h^{-1}v^1_3$ with increasing $\lambda$ as is evident from \eqref{bnds:v3,p0} and \eqref{bnds:v2,p2}. Due to the restrictions $v^1_3=0,$ $\bv^2_\ddagger=0$ on $\partial\omega(z)$, the boundary condition \eqref{NoSlip} is met. Introducing the cut-off function ensures that the conditions \eqref{38} and \eqref{36} are fulfilled. The same is true for the condition \eqref{37} due to \eqref{Pois} and \eqref{BCR}.
\begin{Rem}
The term $X^h(s)\bv^2_\ddagger$ in the region $\supp(1-X^h)$ plays the role of a zero-order approximation of the boundary layer near the ends of the pipe.
\end{Rem}

In order to derive estimates of the error terms, we require an approximate velocity that is divergence-free in the entire domain including near the ends. Hence, to compensate for the error in the divergence of $\bbv$ near the inlet and the outlet, we consider $\bw$ which satisfies
\begin{equation}\label{probw}
\begin{aligned}
-\na_\bx\cdot\bw&=h^{-1}\beta^{-1}(1-X^h)\pd_sv^1_3&\mbox{ in }\Omh,\\
\bw&=\bnt\quad&\mbox{ on }\pd\Omh.
\end{aligned}
\end{equation}
The compatibility condition for this problem is satisfied as
\[\int\limits_{\Omh}{h^{-1}\beta^{-1}(1-X^h)\pd_sv^1_3}\dif\bx=\int\limits_0^1{(1-X^h)h\int\limits_{\om(s)}\pd_sv^1_3\dif\sigma(\eta,\theta)\dif s}=0.\]
Therefore one could apply Lemma \ref{Lem1} to get the corresponding estimates for the vector field $\bw$. However, the estimates can be improved upon by observing that the right hand side vanishes in most of the domain. As $\supp(1-X^h)\subseteq[0,2h]\cup[1-2h,1]$, it suffices to solve the above problem \eqref{probw} in the region $\{\bx\in\Omh:s\in(0,2h)\cup(1-2h,1)\}$ with $\bw$ vanishing on the boundary and then to extend it to the rest of $\Omh$ by setting $\bw=\bnt$ for $s\in[2h,1-2h]$. Note that the compatibility condition \eqref{perpend} is satisfied at both ends. Since this new domain where \eqref{probw} needs to be solved, has comparable size in all directions (order $h$), we may use \eqref{lem2} to conclude
\begin{equation}
\begin{aligned}
\|\nab\bw_\ddagger\|+\|(\pd_s\bw_\ddagger)_\ddagger\|+h^{-1}\|\bw_\ddagger\|+\|\nab w_3\|+\|\pd_sw_3\|
+h^{-1}\|w_3\|\\\leq c\|h^{-1}\beta^{-1}(1-X^h)\pd_sv^1_3\|\leq ch^{1/2}(\lambda+\gamma).
\end{aligned}
\label{78}
\end{equation}


Let us denote the inner product in $L^2(\Omh)$ by $(\cdot,\cdot)$. The discrepancy in \eqref{Stokes} is 
\begin{gather}\nonumber
\Fhb:=\D_\bx(\bv^h-\bbv)-\na_\bx(p^h-\bbp)=-\D_\bx\bbv+\na_\bx\bbp\\\nonumber
=-(h^{-2}\beta^{-1}\nad\cdot\beta\nad\!+\!\beta^{-1}\pd_s\beta^{-1}\pd_s)(X^h\bv^2_\ddagger\!+\!h^{-1}\bc'v^1_3)\!+\!(h^{-1}\nad\!+\!\beta^{-1}\bc'\pd_s)(h^{-3}p^0).
\end{gather}
Rearranging the terms with respect to orders of $h$, we have
\begin{gather*}
\Fhb=-\beta^{-1}\pd_s\beta^{-1}\pd_s(X^h\bv^2_\ddagger)-h^{-1}\beta^{-1}\pd_s\beta^{-1}\pd_s(\bc'v^1_3)\\
-h^{-2}X^h\beta^{-1}\nad\cdot\beta\nad\bv^2_\ddagger+h^{-3}\bc'(-\beta^{-1}\nad\cdot\beta\nad v^1_3+\beta^{-1}\bc'\pd_sp^0).
\end{gather*}
Applying \eqref{55} to the above, we obtain
\begin{gather*}
\Fhb=-h^{-2}X^h\beta^{-1}\nad\cdot\beta\nad\bv^2_\ddagger-h^{-1}\beta^{-1}\pd_s\beta^{-1}\pd_s(\bc'v^1_3)-\beta^{-1}\pd_s\beta^{-1}\pd_s(X^h\bv^2_\ddagger).
\end{gather*}
Now, let us consider the differences $\tv=\bv^h-\bbv-\bw$ and $\prem=p^h-\bbp$ between the true and approximate solutions. The vector $\tv$ is solenoidal by construction. Then, integration by parts and \eqref{35} give us
\begin{gather*}
(\na_\bx\tv+\na_\bx\bw,\na_\bx\tv)=\|\na_\bx\tv\|^2+(\na_\bx\bw,\na_\bx\tv)\\=\!\int\limits_{\om^h(1)}\!\left(\beta^{-1}\pd_s(v^h_3\!-\!\mathbb{v}^h_3)\!-\!p^h\!+\!\bbp\right)\tvt\dif\sigma(r,\theta)\!-\!(\Fhb,\tv)\\
=-\int\limits_{\om^h(1)}h^{-1}\beta^{-1}(\pd_sv^1_3)\tvt\dif\sigma(r,\theta)-(\Fhb,\tv).
\end{gather*}
Substituting the expression for $\Fhb$, the equation above can be written as
\begin{gather*}
\|\na_\bx\tv\|^2=-(\na_\bx\bw,\na_\bx\tv)+(h^{-2}X^h\beta^{-1}\nad\cdot\beta\nad\bv^2_\ddagger,\tv)\\
-\int\limits_{\om^h(1)}h^{-1}\beta^{-1}(\pd_sv^1_3)\tvt\dif\sigma(r,\theta)+(\beta^{-1}\pd_s\beta^{-1}\pd_s(h^{-1}\bc'v^1_3+X^h\bv^2_\ddagger),\tv)
\end{gather*}
Integrating by parts again and using the fact that $\tv$ is divergence-free, we get
\begin{equation}
\begin{aligned}
\|\na_\bx\tv\|^2=-h^{-1}(\beta^{-1}v^1_3\bc'',\beta^{-1}\pd_s\tv)-h^{-1}(\beta^{-1}\bc'\pd_s(v^1_3),\beta^{-1}\pd_s\tv)\\
													-h^{-1}(X^h\nad\bv^2_\ddagger,\nab\tv)-(\beta^{-1}\pd_s(X^h\bv^2_\ddagger),\beta^{-1}\pd_s\tv)-(\na_\bx\bw,\na_\bx\tv).
\end{aligned}
\label{79}
\end{equation}

Let us now estimate the terms in \eqref{79} by using the previously derived estimates. At this point, we shall assume that the parameters $\lambda$, $\gamma$, $\lambda^*$ and $\gamma^*$ are such that \[\gamma=O(\lambda),\quad\gamma^*=(\lambda^*),\quad\mbox{and}\quad\lambda^*=O(h^{-1/2}\lambda^{3/2}).\]

With the help of \eqref{bnds:v3,p0} and \eqref{2der}, the first term in the right hand side of \eqref{79} can be estimated as
\[h^{-1}|(\beta^{-1}v^1_3\bc'',\beta^{-1}\pd_s\tv)|\leq ch^{-1}hh^{-1/2}\lambda^{1/2}	 \|\pd_s\tv\|\leq ch^{-1/2}\lambda^{1/2}\|\na_\bx\tv\|.\]
Using \eqref{bnds:v3,p0} to estimate the second term, we have
\[h^{-1}|(\beta^{-1}\bc'\pd_s(v^1_3),\beta^{-1}\pd_s\tv)|\leq ch^{-1}h\lambda\|\pd_s\tv\|\leq ch^{0}\lambda\|\na_\bx\tv\|.\]
We use \eqref{bnds:v2,p2} to get
\[h^{-1}|(X^h\nad\bv^2_\ddagger,\nab\tv)|\leq ch^{-1}h\lambda\|\nab\tv\|\leq ch^{0}\lambda\|\na_\bx\tv\|.\]
Then due to \eqref{bnds:dv2} and \eqref{2der}, we have
\[|(\beta^{-1}\pd_s(X^h\bv^2_\ddagger),\beta^{-1}\pd_s\tv)|\leq ch^{1/2}\lambda^{3/2}\|\pd_s\tv\|\leq ch^{1/2}\lambda^{3/2}\|\na_\bx\tv\|.\]
Finally, \eqref{78} gives us
\[|(\na_\bx\bw,\na_\bx\tilde{\bv}^h)|\leq ch^{1/2}\lambda\|\na_\bx\tv\|.\]
Also, by Friedrichs's inequality,
\[
\|\tv\|\leq ch\|\na_\bx\tv\|
\]
for the curved cylinder $\Omh$.
Thus, for the discrepancy in the velocity, we arrive at the estimate 
\begin{equation}
\|\na_\bx\tv\|+h^{-1}\|\tv\|\leq ch^{-1/2}\lambda^{1/2}.
\label{Disv}
\end{equation}
Here we note that $\|\na_\bx\bw\|$ is $O(h^{1/2}\lambda)$ while $\|\na_\bx\tv\|$ is $O(h^{-1/2}\lambda^{1/2})$ which leads to \[\|\na_\bx\vrem\|\leq\|\na_\bx\tv\|+\|\na_\bx\bw\|\leq ch^{-1/2}\lambda^{1/2}.\] Moreover, as $\|\bw\|$ is $O(h^{3/2}\lambda)$ and $\|\tv\|$ is $O(h^{1/2}\lambda^{1/2})$, \[\|\vrem\|\leq\|\tv\|+\|\bw\|\leq ch^{1/2}\lambda^{1/2}.\] On the other hand, $\|\na_\bx\bbv\|$ is $O(h^{-1})$ whereas $\|\bbv\|$ is $O(h^0)$. Thus it is safe to conclude that the approximation of velocity is justified for a $\lambda$ which is $O(h^{-1+2\delta})$ for any $\delta>0.$

Let us now estimate the discrepancy in the approximation of pressure. Let us denote the average of a scalar field over $\Omh$ by placing a bar over the corresponding symbol. Consider the velocity field $\tw$ such that
\begin{equation}\nonumber
\begin{aligned}
-\na_\bx\cdot\tw&=\prem-\premb&\mbox{ in }\Omh,\\
						 \tw&=\bnt	 \quad&\mbox{ on }\pd\Omh.
\end{aligned}
\end{equation}
Clearly, the compatibility condition \eqref{perp} is satisfied.

Then, integration by parts and \eqref{35} result in
\begin{gather*}
(\na_\bx\tv+\na_\bx\bw,\na_\bx\tw)+\|\prem-\premb\|^2\\=\!\!\!\!\int\limits_{\om^h(1)}\!\!\!\!\left(\beta^{-1}\pd_s(v^h_3\!\!-\mathbb{v}^h_3+w_3)-p^h\!\!+\bbp\right)\tilde{w}_3\dif\sigma(r,\theta)-(\Fhb,\tw)\\
=-\int\limits_{\om^h(1)}h^{-1}(\beta^{-1}\pd_sv^1_3)\tilde{w}_3\dif\sigma(r,\theta)-(\Fhb,\tw).
\end{gather*}
Similar steps as before lead us to
\begin{equation}
\begin{aligned}
\|\prem-\premb\|^2&=-(\na_\bx\tv,\na_\bx\tw)-(\na_\bx\bw,\na_\bx\tw)\\
&-h^{-1}(\beta^{-1}v^1_3\bc'',\beta^{-1}\pd_s\tw)-h^{-1}(\beta^{-1}\bc'\pd_s(v^1_3),\beta^{-1}\pd_s\tw)\\
&-h^{-1}(X^h\nad\bv^2_\ddagger,\nab\tw)-(\beta^{-1}\pd_s(X^h\bv^2_\ddagger),\beta^{-1}\pd_s\tw).
\end{aligned}
\label{80}
\end{equation}
Using \eqref{Disv} and Lemma \ref{Lem1}, we estimate the first term on the right hand side of \eqref{80} as
\begin{align*}
											|(\na_\bx\tv,\na_\bx\tw)|&\leq ch^{-1/2}\lambda^{1/2}\|\na_\bx\tw\|\leq ch^{-3/2}\lambda^{1/2}\|\prem-\premb\|.
\end{align*}
Due to \eqref{78}, for the next term, we have
\begin{align*}
											|(\na_\bx\bw,\na_\bx\tw)|&\leq ch^{1/2}\lambda\|\na_\bx\tw\|\leq ch^{-1/2}\lambda^{1/2}\|\prem-\premb\|.
\end{align*}
Once again by Lemma \ref{Lem1}, \eqref{bnds:v3,p0} and \eqref{2der}, we get
\begin{align*}
				|h^{-1}(\beta^{-1}v^1_3\bc'',\beta^{-1}\pd_s\tw)|\leq ch^{-1}hh^{-1/2}\lambda^{1/2}\|\na_\bx\tw\|&\leq ch^{-3/2}\lambda^{1/2}\|\prem-\premb\|,\\
	|h^{-1}(\beta^{-1}\bc'\pd_s(v^1_3),\beta^{-1}\pd_s\tw)|\leq ch^{-1}h\lambda\|\na_\bx\tw\|	 &\leq ch^{-1}\lambda\|\prem-\premb\|.
\end{align*}
We use \eqref{bnds:v2,p2} to get
\[h^{-1}|(X^h\nad\bv^2_\ddagger,\nab\tw)|\leq ch^{-1}h\lambda\|\na_\bx\tw\|\leq ch^{-1}\lambda\|\prem-\premb\|.\]
Then due to \eqref{bnds:dv2} and \eqref{2der}, we have
\[|(\beta^{-1}\pd_s(X^h\bv^2_\ddagger),\beta^{-1}\pd_s\tw)|\leq ch^{1/2}\lambda^{3/2}\|\na_\bx\tw\|\leq ch^{-1/2}\lambda^{3/2}\|\prem-\premb\|.\]

Thus, for the discrepancy in the pressure, we get
\[
\|\prem-\premb\|\leq ch^{-3/2}\lambda^{1/2}.
\]
As $\|\bbp-\overline{\mathbb{p}}^h\|$ is $O(h^{-2})$, once again we see that the approximate pressure upto a constant is justified for a $\lambda$ which is $O(h^{-1+2\delta})$ for any $\delta>0.$

To summarize, we have shown that \eqref{rest} and \eqref{appest} hold under the assumptions \eqref{asmp} thereby justifying our asymptotic approximations.

\section{The case of $O(1)$ curvature}

The more conventional method to tackle the problem in the case of a mildly curving pipe, where we assume that $\bc$ is a smooth function whose derivatives are bounded independently of $h$, would be to expand the scale factor $\beta$ as $1\!-\!h\eta\bc''\!\!\cdot\!\be_1$, instead of keeping it as a parameter for the asymptotic procedure. Let us compare the results obtained by our method in this case with those obtained with the conventional method as mentioned. For this case, the assumptions on the geometry of the centre curve are such that
\[|\bc'''|\leq ch^0,\quad|\bc''''|\leq ch^0,\quad|\pd_sR|\leq ch^0\quad\mbox{and}\quad|\pd_s^2R|\leq ch^0.\]
The first inequality above implies that the curvature has the restriction \[|\bc''|\leq ch^0.\]With these assumptions, let the solution $\{\bv^h,p^h\}$ admit the following formal asymptotic expansions due to the conventional method:
\begin{align*}
	p^h=h^{-3}q^0+h^{-2}q^1+\cdots,\\
	\bv^h=h^{-1}\mathbf{u}^1+\mathbf{u}^2+\cdots.
\end{align*}

Then, \eqref{Stokes}, \eqref{Div} and \eqref{NoSlip} imply that $u^1_3$ and $q^0=q^0(s)$ satisty
\begin{equation}\label{u1q0}
\begin{aligned}
-\Dd u^1_3+\pd_sq^0=0\quad&\mbox{in}\quad\om(s),\\
u^1_3=0\quad\mbox{on}\quad\pd\om(s).
\end{aligned}
\end{equation}
where as $u^2_3$ and $q^1$ fulfill
\begin{equation}\label{u2q1}
\begin{aligned}
-\Dd u^2_3+\pd_sq^1=-\bc''\cdot\nad u^1_3-\eta\bc''\cdot\be_1\pd_sq^0\quad&\mbox{in}\quad\om(s),\\
u^2_3=0\quad\mbox{on}\quad\pd\om(s).
\end{aligned}
\end{equation}
It can be shown that $\mathbf{u}_\ddagger^1$ is still $\bnt_\ddagger$ as well as $q^1=q^1(s)$. One can obtain $\mathbf{u}_\ddagger^2$ and $q^2$ from
\begin{equation}\label{u2q2}
\begin{aligned}
	-\Dd\mathbf{u}_\ddagger^2+\nad q^2=\bnt_\ddagger,\quad-\nad\cdot\mathbf{u}^2_\ddagger=\pd_su^1_3\quad\mbox{in}\quad\om(s),\\
\mathbf{u}^2_\ddagger=\bnt_\ddagger\quad\mbox{on}\quad\pd\om(s)
\end{aligned}
\end{equation}
whereas, for the next terms, we have
\begin{equation}\label{u3q3}
\begin{aligned}
	-\Dd\mathbf{u}_\ddagger^3+\nad q^3=\bc'''_\ddagger u^1_3+2\bc''\pd_su^1_3-\bc''\cdot\nad\mathbf{u}_\ddagger^2,\\
	-\nad\cdot\mathbf{u}^3_\ddagger=\pd_su^2_3-\bc''\cdot\mathbf{u}_\ddagger^2+\eta\bc''\cdot\be_1\pd_su^1_3\quad\mbox{in}\quad\om(s),\\
\mathbf{u}^3_\ddagger=\bnt_\ddagger\quad\mbox{on}\quad\pd\om(s).
\end{aligned}
\end{equation}

Due to \eqref{u1q0}, we have \[u_3^1=-\frac{1}{2}\Psi_0\pd_sq^0\] where the function $\Psi_0$ is the solution of
\begin{equation}
-\Dd\Psi_0=2\quad\mbox{in}\quad\om(s),\quad\Psi_0=0\quad\mbox{on}\quad\pd\om(s).
\label{Pran0}
\end{equation}
Similarly, \eqref{u2q1} gives \[u_3^2=-\frac{1}{2}(\Psi_1\pd_sq^0+\Psi_0\pd_sq^1)\] where the function $\Psi_2$ is the solution of
\begin{equation}
-\Dd\Psi_1=2\eta\bc''\cdot\be_1-\bc''\cdot\nad\Psi_0\quad\mbox{in}\quad\om(s),\quad\Psi_1=0\quad\mbox{on}\quad\pd\om(s).
\label{Pran1}
\end{equation}
For the discrepancy in approximating the function $\Psi$ obtained by our method with $\Psi_0+h\Psi_1$ we find using \eqref{Pran}, \eqref{Pran0} and \eqref{Pran1} that
\begin{gather*}
	-\beta^{-1}\nad\!\cdot\!\beta\nad(\Psi-\Psi_0-h\Psi_1)=-\Dd(\Psi-\Psi_0-h\Psi_1)+h\beta^{-1}\bc''\!\cdot\!\nad(\Psi-\Psi_0-h\Psi_1)\\
	=2\beta^{-1}-2-h(2\eta\bc''\cdot\be_1-\bc''\cdot\nad\Psi_0)-h\beta^{-1}\bc''\!\cdot\!\nad\Psi_0-h^2\beta^{-1}\bc''\!\cdot\!\nad\Psi_1\\
	=2(\beta^{-1}-1-h\eta\bc''\cdot\be_1)-h^2\beta^{-1}\bc''\!\cdot\!\nad\Psi_1=O(h^2).
\end{gather*}
Thus we conclude that $\Psi_0+h\Psi_1$ approximates $\Psi$ up to order $h^2$. It follows that, defining
\[G_i(s):=2\!\!\int\limits_{\om(s)}\!\!{\Psi_i(\eta,\theta,s)\dif\sigma(\eta,\theta)},\quad i\in\{0,1\},\]
we get an approximation $G_0+hG_1$ for the function $G$ with error $O(h^2)$.

Note that due to the boundary and the divergence conditions in \eqref{u2q2},
\begin{gather*}
\int\limits_{\om(s)}\!\!{\left(\bc''\cdot\mathbf{u}_\ddagger^2-\eta\bc''\cdot\be_1\pd_su^1_3\right)\dif\sigma(\eta,\theta)}
=\int\limits_{\om(s)}\!\!{\left(\bc''\cdot\mathbf{u}_\ddagger^2+\eta\bc''\cdot\be_1\nad\cdot\mathbf{u}_\ddagger^2\right)\dif\sigma(\eta,\theta)}\\
=\int\limits_{\om(s)}\!\!{\nad\cdot(\eta\bc''\cdot\be_1\mathbf{u}_\ddagger^2)\dif\sigma(\eta,\theta)}=0.
\end{gather*}
Hence, the compatibility conditions in \eqref{u2q2} and \eqref{u3q3} respectively provide the equations for $q^0$ and $q^1$ as
\begin{align*}
	-\pd_s(G_0(s)\pd_sq^0(s))=0\quad\mbox{and}\quad
	-\pd_s(G_0(s)\pd_sq^1(s)+G_1(s)\pd_sq^0(s))=0,\quad s\in(0,1).
\end{align*}
We use the same boundary conditions as in \eqref{BCR} and \eqref{per} so that
\begin{alignat*}{2}
	-G_0(0)\pd_sq^0(0)&=4F^0\quad&&\mbox{and}\quad q^0(1)=p^0_{per},\\
	-G_0(0)\pd_sq^1(0)-G_1(0)\pd_sq^0(0)&=0\quad&&\mbox{and}\quad q^1(1)=0.
\end{alignat*}
Thus we have the solutions
\begin{align*}
	q^0(s)&=p^0_{per}+4F^0\int\limits_s^1{\frac{1}{G_0(t)}\dif t},\\
	q^1(s)&=				 -4F^0\int\limits_s^1{\frac{G_1(t)}{G_0(t)^2}\dif t}.
\end{align*}
Then for the discrepancy in the approximations in pressure, we have
\begin{gather*}
	p^0(s)-q^0(s)-hq^1(s)=4F^0\int\limits_s^1{\left(\frac{1}{G(t)}-\frac{1}{G_0(t)}\left(1-\frac{hG_1(t)}{G_0(t)}\right)\right)\dif t}\\
	=4F^0\int\limits_s^1{\left(\frac{1}{G(t)}-\frac{1}{G_0(t)+hG_1(t)}+O(h^2)\right)\dif t}=O(h^2).
\end{gather*}
Now let us consider the difference in the velocity components given by the two methods. For the longitudinal part, we have
\begin{gather*}
	2|v_3^1-u_3^1-hu_3^2|=|\Psi\pd_sp^0-\Psi_0\pd_sq^0-h(\Psi_0\pd_sq^1+\Psi_1\pd_sq^0)|\\
	=|(\Psi_0+h\Psi_1)\pd_s(q^0+hq^1)+O(h^2)-(\Psi_0+h\Psi_1)\pd_sq^0-h\Psi_0\pd_sq^1)|=O(h^2).
\end{gather*}
On the other hand, for the transversal components, we consider \eqref{u2q2}, \eqref{u3q3} and \eqref{56} so that
\begin{gather*}
	-\beta^{-1}\nad\cdot\beta\nad(\bv^2_\ddagger-\mathbf{u}^2_\ddagger-h\mathbf{u}^3_\ddagger)+\nad (p^2-q^2-hq^3)\\
	=h\bc'''_\ddagger(\beta^{-2} v^1_3-u^1_3)+hc''(2\beta^{-2}\pd_s v^1_3-2\pd_s u^1_3-\beta^{-3}v^1_3\pd_s\beta)\\
	=O(h^2)\be_1+O(h^2)\be_2,
\end{gather*}
as well as
\begin{gather*}
	-\nad\cdot\beta(\bv^2_\ddagger-\mathbf{u}^2_\ddagger-h\mathbf{u}^3_\ddagger)=\pd_sv^1_3-\beta(\pd_su^1_3+h\pd_su^2_3)\\
	+h\beta(\bc''\cdot\mathbf{u}_\ddagger^2-\eta\bc''\cdot\be_1\pd_su^1_3)-h\bc''\cdot(\mathbf{u}_\ddagger^2+h\mathbf{u}_\ddagger^2)\\
	=\pd_sv^1_3-\pd_su^1_3-h\pd_su^2_3+h(\beta-1)(\bc''\cdot\mathbf{u}_\ddagger^2-\pd_su^2_3)+(\beta-1)^2\pd_su^1_3=O(h^2).
\end{gather*}

Thus, we conclude that our method produces two-term asymptotic approximations corresponding to a more conventional method for the solution of the problem \eqref{Stokes}, \eqref{Div} and \eqref{NoSlip} in the case of mild curvature.

\appendix

\section{Proof of Claim \ref{Claim}}

\begin{proof}
Firstly, note that the normal $R\be_1-(\pd_\theta R)\be_2=-\pd_\theta(R\be_2)$. Then due to the divergence theorem and \eqref{modeq1a},
\begin{align*}
-\!\int\limits_{\om(s)}\!{\nad\!\cdot\!(\pd_s\beta)\mathbf{u}}\dif\sigma(\eta,\theta)&=\int\limits_{0}^{2\pi}\!{((\pd_s\beta)\mathbf{u})|_{\eta=R}\cdot\pd_\theta(R\be_2)\dif\theta}=\int\limits_{0}^{2\pi}\!{(\pd_s\beta)|_{\eta=R}\mathbf{h}\cdot\pd_\theta(R\be_2)\dif\theta}.
\end{align*}
Once again due to \eqref{modeq1a},
\[-\int\limits_{\om(s)}{\beta^{-1}(\pd_s\beta)\nad\!\cdot\!\beta\mathbf{u}}\dif\sigma(\eta,\theta)=\int\limits_{\om(s)}{(\pd_s\beta)g}\dif\sigma(\eta,\theta).\]
On the other hand, deriving \eqref{compat} with respect to $s$, we get
\begin{align*}
	0&=\int\limits_{\om(s)}{((\pd_s\beta)g+\beta\pd_sg)}\dif\sigma(\eta,\theta)+\int\limits_{0}^{2\pi}{(\pd_sR)(\beta g)|_{\eta=R}R\dif\theta}\\
	 &-\int\limits_{0}^{2\pi}{((\beta\pd_s\mathbf{h}+(\pd_s\beta)\mathbf{h}+(\pd_\eta\beta)(\pd_sR)\mathbf{h})|_{\eta=R}\cdot\pd_\theta(R\be_2)+\beta\mathbf{h}\cdot\pd_s\pd_\theta(R\be_2))|_{\eta=R}\dif\theta}.
\end{align*}
Hence, to prove the claim, it suffices to show that
\[\int\limits_{0}^{2\pi}{((\pd_sR)((\beta\pd_\eta\mathbf{u}+(\pd_\eta\beta)\mathbf{h})\cdot\pd_\theta(R\be_2)-R\beta g)+\beta\mathbf{h}\cdot\pd_s\pd_\theta(R\be_2))|_{\eta=R}\dif\theta}=0.\]
Considering the first term, we have
\[\beta\pd_\eta\mathbf{u}\cdot\pd_\theta(R\be_2)=\beta((\pd_\theta R)\pd_\eta u_2-R\pd_\eta u_1).\]
Then for the next term, due to \eqref{modeq1a} and the fact that $\be_1\cdot\nad=\pd_\eta$, we get
\[(\pd_\eta\beta)\mathbf{h}\cdot\pd_\theta(R\be_2)=\be_1\cdot\nad\beta((\pd_\theta R)u_2-Ru_1).\]
For the third term, we have
\[-R\beta g=R\nad\cdot\beta\mathbf{u}=\beta(R\pd_\eta u_1+u_1+\pd_\theta u_2)+R(u_1\be_1+u_2\be_2)\cdot\nad\beta.\]
Lastly, noting that $\eta\be_2\cdot\nad=\pd_\theta$, we have
\begin{gather*}
	\int\limits_{0}^{2\pi}{\beta|_{\eta=R}\mathbf{h}\cdot\pd_s\pd_\theta(R\be_2)\dif\theta}=-\int\limits_{0}^{2\pi}{\pd_\theta((\beta\mathbf{u})|_{\eta=R})\cdot\pd_s(R\be_2)\dif\theta}\\
	=-\int\limits_{0}^{2\pi}{(\pd_sR)(\beta(u_1+\pd_\theta u_2+(\pd_\theta R)\pd_\eta u_2)+u_2(\be_1\pd_\theta R+\be_2R)\cdot\nad\beta)\dif\theta}.
\end{gather*}
Combining the above, the claim is proved.
\end{proof}

\bibliographystyle{plain}
\bibliography{Bibl4}

\end{document}